\documentclass[12pt]{article}

\usepackage{amsmath}
\usepackage{amssymb}
\usepackage{amsthm}
\usepackage{amsfonts}
\usepackage{latexsym}
\usepackage{cite}
\usepackage{epsfig}
\usepackage{graphicx}
\usepackage{mathrsfs}
\usepackage{url}
\usepackage{color}
\usepackage{eufrak}
\usepackage{multirow}
\usepackage{hyperref}

\usepackage{scalerel}
\usepackage{tikz}
\usetikzlibrary{svg.path}

\definecolor{orcidlogocol}{HTML}{A6CE39}
\tikzset{
  orcidlogo/.pic={
    \fill[orcidlogocol] svg{M256,128c0,70.7-57.3,128-128,128C57.3,256,0,198.7,0,128C0,57.3,57.3,0,128,0C198.7,0,256,57.3,256,128z};
    \fill[white] svg{M86.3,186.2H70.9V79.1h15.4v48.4V186.2z}
                 svg{M108.9,79.1h41.6c39.6,0,57,28.3,57,53.6c0,27.5-21.5,53.6-56.8,53.6h-41.8V79.1z M124.3,172.4h24.5c34.9,0,42.9-26.5,42.9-39.7c0-21.5-13.7-39.7-43.7-39.7h-23.7V172.4z}
                 svg{M88.7,56.8c0,5.5-4.5,10.1-10.1,10.1c-5.6,0-10.1-4.6-10.1-10.1c0-5.6,4.5-10.1,10.1-10.1C84.2,46.7,88.7,51.3,88.7,56.8z};
  }
}

\newcommand\orcidicon[1]{\href{https://orcid.org/#1}{\mbox{\scalerel*{
\begin{tikzpicture}[yscale=-1,transform shape]
\pic{orcidlogo};
\end{tikzpicture}
}{|}}}}

\usepackage{hyperref} 

\usepackage{longtable}

\makeatletter
\@addtoreset{equation}{section}
\makeatother

\makeatletter
\@addtoreset{table}{section}
\makeatother

\newtheorem{theorem}{Theorem}[section]
\newtheorem{lemma}[theorem]{Lemma}

\theoremstyle{definition}
\newtheorem{definition}[theorem]{Definition}

\newtheorem{remark}[theorem]{Remark}

\newcommand{\Vc}{\mathcal{V}}

\newcommand{\PG}{\mathrm{PG}}
\newcommand{\QM}{\overline{\mathrm{QM}}}

\newcommand{\zb}{\mathbf{0}}

\newcommand{\F}{\mathbb{F}}

\newcommand{\Bs}{\mathscr{B}}
\newcommand{\Ps}{\mathscr{P}}

\newcommand{\Ab}{\boldsymbol{A}}

\newcommand{\D}{\boldsymbol{D}}
\newcommand{\Hb}{\boldsymbol{H}}
\newcommand{\hb}{\boldsymbol{h}}
\newcommand{\Ib}{\boldsymbol{I}}

\newcommand{\Ub}{\boldsymbol{U}}
\newcommand{\ub}{\boldsymbol{u}}
\newcommand{\pib}{\boldsymbol{\pi}}
\newcommand{\Wb}{\boldsymbol{W}}
\newcommand{\Hcb}{\boldsymbol{\mathcal{H}}}

\newcommand{\wb}{\boldsymbol{w}}

\newcommand{\xb}{\boldsymbol{x}}
\newcommand{\yb}{\boldsymbol{y}}

\newcommand{\ov}{\overline{v}}
\newcommand{\oc}{\overline{c}}

\newcommand{\ozero}{\overline{0}}

\newcommand{\T}{\text}
\newcommand{\db}{\displaybreak[3]}

\begin{document}

\title{
Further results on binary codes of covering radius 2 and saturating sets in projective spaces}
\date{}
\maketitle
\begin{center}
{\sc Alexander A. Davydov \orcidicon{0000-0002-5827-4560}}\\
 {\sc\small Kharkevich Institute for Information Transmission Problems}\\
 {\sc\small Russian Academy of Sciences,
Moscow, 127051, Russian Federation}\\
 \emph{E-mail address:} alexander.davydov121@gmail.com\medskip\\
 {\sc Stefano Marcugini  \orcidicon{0000-0002-7961-0260},
 Fernanda Pambianco  \orcidicon{0000-0001-5476-5365}}\\
 {\sc\small Department of  Mathematics  and Computer Science, University of Perugia,}\\
 {\sc\small Perugia, 06123, Italy}\\
 \emph{E-mail addresses:} \{stefano.marcugini, fernanda.pambianco\}@unipg.it\medskip\\
 {\sc Stephen Wu \orcidicon{0009-0003-8471-2397}}\\
 {\sc\small Independent, San Francisco, California, USA}\\
  \emph{E-mail address:} wujstephen@gmail.com
\end{center}

\textbf{Abstract.}
The length function $\ell_2(r,R)$ is the smallest length of a binary linear code with codimension (redundancy) $r$ and covering radius $R$. Let $s_2(N,\rho)$ be the smallest size of a $\rho$-saturating set in the projective space $\mathrm{PG}(N,2)$. It is known that $\ell_2(r,R)=s_2(r-1,R-1)$. We obtain the following new upper bounds on $\ell_2(r,2)$, which yield a decrease
$\Delta(r,2)$ compared to the best previously known upper bounds: $r=2t,r=10,18,20$ and $r\ge28,\ell_2(r,2)=s_2(r-1,1)\le51\cdot2^{r/2-5}-1;\Delta(r,2)=2^{r/2-5}$.
To obtain these bounds, we construct a new infinite code family, using distinct versions of the $q^m$-concatenating constructions of covering codes; some of these versions are proposed in this paper. We also obtain new useful partitions of column sets of parity check matrices of some codes. The asymptotic covering density $\overline{\mu}(2)\le1.27002$, provided by the codes of the new family, is smaller than previously known one and gives rise to  the new upper bound $f(2)\le1.27002$ on the constant $f(2)$ of the Green's Open Problem 40.

\textbf{Keywords:} Binary covering codes, the length function, covering radius

\textbf{Mathematics Subject Classification (2010).} 94B65, 94B25, 94B60, 94B05

\section{Introduction, notations, and background}\label{sec1:Intro}

Let $\F_{q}$ be the \emph{Galois field} with $q$ elements, $\F_q^*=\F_{q}\setminus\{0\}$. Throughout the paper we use the field $\F_2$. Let $\triangleq$ denote the sign ``equality by definition''. The \emph{$n$-dimensional vector space over $\F_2$} is denoted by $\F_2^n$. Let $\F_2^{m\times n}$ be the set of binary $m \times n$ matrices. Let $\#S$ be the cardinality of a set $S$.  Let a $t$-set (resp. a $t$-subset) be a set (resp. a subset) of cardinality~$t$. We use lowercase letters, e.g. $v$, for scalars, and overlined lowercase letters, e.g. $\ov$, for vectors. Let $\ozero$ be the zero vector the length of which is clear by the context. Three binary vectors are linearly dependent if their sum is the zero vector. Let $tr$ denote transposition.
We denote by $\Ib_{r}$ the identity matrix of order $r$ whose leftmost column is $(10\ldots0)^{tr}$.

The \emph{Hamming weight} $\T{wt}(\ov)$ of a vector $\ov\in\F_2^n$ is the number of its non-zero coordinates.
The \emph{Hamming distance} $d(\ov,\ov')$ between two vectors $\ov$, $\ov'$ is the number of coordinates in which they differ.
The \emph{binary Hamming ball} of radius $\varrho$ with center $\ov\in \F_2^n$ is the set $\{\ov'\,|\,\ov'\in \F_2^n,~d(\ov,\ov')\leq \varrho\}$, its volume $V_{\varrho,n,2}$ is $V_{\varrho,n,2}=\sum_{i=0}^\varrho\binom{n}{i}$.

A \emph{binary linear code} $C$ of length $n$ and dimension $k$ is  a $k$-dimensional subspace of $\F_2^n$; it is denoted as $[n,k]_2$ code, $\#C=2^k$. Let $r\triangleq n-k$ be the \emph{codimension} (or \emph{redundancy}) of an $[n,k]_2$ code.
A vector $\oc\in C$ is a \emph{codeword} of $C$. The \emph{minimum distance} (or simply \emph{distance}) $d$ of a code $C$ is the minimum Hamming distance between any pair of codewords.

A binary $(n-k)\times n$ matrix $\Hb\in\F_2^{(n-k)\times n}$ is a \emph{parity check matrix} of a binary linear $[n,k]_2$ code $C$ if, for any codeword $\oc\in C$, we have $\oc\times\Hb^{tr}=\ozero$.

\begin{definition} \label{Def1:CoverRad}
A binary linear $[n,k]_2$ code $C$ has \emph{covering radius} $R$ if any of the following equivalent conditions  holds:

\textbf{(i)} The value $R$  is the smallest integer such that the space $\F_2^n$ is covered by binary Hamming balls of radius $R$ centered at all the codewords of $C$.

\textbf{(ii)}
Every vector in $\F_2^{n-k}$ can be expressed as a sum of at most $R$ columns of an $(n-k)\times n$ parity check matrix of $C$, and $R$ is the smallest value with this property.
\end{definition}

An $[n,k]_2$ code with minimum distance $d$ and covering radius $R$ is denoted $[n,k,d]_2R$,
where $d$ and $R$ can be omitted when they are not relevant, not known, or clear from the context.

\begin{definition}\label{def1:LO}
\cite[Definition 1]{DFMP-LO-CovCodIEEE2005}
A linear covering code is called \emph{locally optimal (LO)}
if one cannot remove any column from the parity check matrix of the
code without increasing its covering radius. An LO code can be called
also non-shortening in the sense mentioned.
\end{definition}

For an introduction to coding theory, see \cite{HufPless,MWS,Roth} and the references therein.

\begin{definition} \label{Def1:CoverDens}
The \emph{covering density} $\mu(n,R,C)$  of an $[n,k]_2R$-code $C$ is the ratio of the total volume $2^k V_{R,n,2}$ of all $2^k$ Hamming balls of radius $R$ centered at the codewords to the total volume $2^n$ of the space $\F_2^n$. We have
\begin{equation}
\mu(n,R,C)\triangleq\frac{2^k V_{R,n,2}}{2^n}=\frac{1}{2^{n-k}}\sum_{i=0}^R\binom{n}{i}\ge1.\label{eq1:CovDensity}
\end{equation}
We may write $\mu(C)$, $\mu(n-k)$, $\mu(r)$, $\mu(r,R)$, $\mu(n,R)$, or $\mu$ if this is clear from the context.
\end{definition}

  Let $U$ be an infinite family of codes with covering radius $R$, and let $U_n$ be a code in $U$ with length $n$. For the family $U$ we consider the asymptotic parameter \cite{Dav90PIT,GabDavTombR=2}:
\begin{equation}\label{eq1:liminfdens}
  \overline{\mu}(R,U)\triangleq\liminf_{n\rightarrow\infty,\,U_n\in U}\mu(n,R,U_n).
\end{equation}
We may write $\overline{\mu}(R)$  when $U$ is defined by the context.

The covering quality of a code is better if its covering density is smaller.
For fixed $n-k$ and $R$, the covering density $\mu$  decreases as $n$ decreases.

\begin{definition}\label{def1:length function}
The \emph{length function} $\ell_2(r,R)$ is the smallest
length of a binary linear code with codimension (redundancy) $r$ and covering radius $R$.
\end{definition}

The study of covering codes is a classical combinatorial problem. Covering codes are connected to many areas of information theory, combinatorics, and practical applications, see \cite[Section 1.2]{CHLL-bookCovCod}, \cite[Section 1.2]{DMP-arXiv2025}, \cite[Section 1.2]{DMP-IEEE2026}, and the references therein.

For an introduction to coverings of vector Hamming spaces over finite fields, see
\cite{GrahSlo,Handbook-coverings,CHLL-bookCovCod,DGMP-AMC,HufPless,Delsarte} and the references therein. The results and problems, related to covering codes, are discussed in numerous works; for example, see the online bibliography \cite{LobstBibl}, the papers
\cite{BaichVavLengthFunBin,Graismer-2024,BrPlWi,Dav90PIT,DDL-ACCT2-1990,DDL-ACCT3-1992,DDL-IEEE,DFMP-LO-CovCodIEEE2005,%
DMP-arXiv2025,DMP-IEEE2026,DavOst-IEEE2001,DavOst-DESI2010,HonkLits1996,Giul2013Survey,Janwa,KabPan,KaikRoseADSlike2003,%
OstKaikUpBndBin1998,Struik,PlanFuncCovCodWuPan2025,DougJanCovRadCalc1991} and \cite[Section 5, Problem 40]{Green}.

Let $\PG(N,2)$ be the binary $N$-dimensional projective space over $\F_2$.
  A point set $S\subseteq\PG(N,2)$ is $\rho$-\emph{saturating} if every point of $\PG(N,2)$ can be written as a sum of at most $\rho+1$ points of $S$ and $\rho$ is the smallest value with this property. Saturating sets are also called ``saturated sets'', ``spanning sets'', and ``dense sets''. A $\rho$-\emph{saturating} set is called \emph{minimal} if one cannot remove any point from the set without losing the $\rho$-\emph{saturating} property.

Let $s_2(N,\rho)$ be the smallest size of a $\rho$-saturating set in $\PG(N,2)$.
If the positions of a column of a parity check matrix of an $[n,n-r]_2R$ code are interpreted as coordinates of a point in $\PG(r-1,2)$, then this matrix corresponds to an $(R-1)$-saturating $n$-set in $\PG(r-1,2)$, and vice versa. Thus, there is a \emph{one-to-one correspondence} between $[n,n-r]_2R$ codes and $(R-1)$-saturating $n$-sets in $\PG(r-1,2)$. This implies
\begin{equation}\label{eq1:cov=sat}
  \ell_2(r,R)=s_2(r-1,R-1).
\end{equation}
Note also that a minimal $\rho$-saturating set corresponds to a locally optimal covering code.

For an introduction to geometries over finite fields and their connections with coding theory, see \cite{Dav95,DGMP-AMC,EtzStorm2016,Giul2013Survey,Hirs,HirsStor-2001} and the references therein.

\textbf{Green's Open Problem 40}. \cite[Section 5, Problem 40]{Green} Let $R$ be a fixed positive integer. Let $f(R)$ be the smallest constant with the following property: there exists an infinite sequence of integers $n$ together with subspaces $V_n\subseteq\F_2^n$ such that the space $\F_2^n$ is covered by binary Hamming balls of radius $R$ centered at all the vectors of $V_n$ and $\#V_n=(f(R) +o(1))2^n/V_{R,n,2}$. Does $f(R)\rightarrow\infty$?

By above, see \eqref{eq1:CovDensity} and \eqref{eq1:liminfdens}, the Green's Open Problem 40 is connected with finding infinite families of binary codes with relatively small asymptotic covering density; one can put $f(R)=\liminf_{r\rightarrow\infty}\mu(n=\ell_2(r,R),R)$; therefore an infinite code family providing $\overline{\mu}(R)\le c$, where $c$ is a constant, gives rise to estimate $f(R)\le c$.

In \cite{Green}, it is noted that $f(1)=1$ (due to the Hamming codes) and the best-known upper bound on $f(2)$ (at the time of writing \cite{Green} 2013-2018) is $f(2)\le 1.4238$ (with refer to \cite{Dav90PIT}). Here there is a small inaccuracy in the reference: in fact, the value 1.4238 was represented in the conference ACCT-3 in 1992, see \cite[p. 53]{DDL-ACCT3-1992}, and then have published in \cite[Example 3.1, equation (1.3)]{DDL-IEEE} in 1994. At that, both the papers \cite{DDL-ACCT3-1992,DDL-IEEE} use the methods proposed in \cite{Dav90PIT}. Note also that in \cite[equation (2.1), Theorem 5.7]{DMP-arXiv2025} at November 2025,  the upper bound was improved to $f(2)\le 1.32031$.

\emph{This paper} is devoted to \emph{new constructive upper bounds on the binary length functions} $\ell_2(r,2)$.
If there is an $[n,n-r]_22$ code, then $\ell_2(r,2)\le n$. For $q=2$, $R=2$, we consider the known versions of the $q^m$-concatenating constructions of covering codes, see e.g. \cite{Dav90PIT,DDL-IEEE,DGMP-AMC,DMP-arXiv2025,DMP-IEEE2026,DavOst-IEEE2001,DavOst-DESI2010}, and propose new ones. To use the constructions effectively, we obtain new useful partitions of column sets of parity check matrices for certain covering codes. As a result, we obtain new infinite family of covering codes that provides improved upper bounds on $\ell_2(r,2)$ and smaller asymptotic covering densities $\overline{\mu}(2)$ \eqref{eq1:liminfdens} compared with previously known values; this implies the improvement of the upper bound on the constant of the Green's Open Problem 40 to $f(2)\le 1.27002$. Also, we obtain a few sporadic new covering codes, see Section \ref{sec2:main res} for details.

The new $[50,40,3]_22$ code and the partition of column set of its parity check matrix into 10 subsets, proposed by Stephen Wu, see \cite{Wu2026}, play a very important role in obtaining new results. In particular, the code is used as a seed in the $q^m$-concatenating constructions for $q=2$, $R=2$.

The paper is organized as follows. In Section \ref{sec2:main res}, we collect the new results  and compare them with the known ones.
In Section \ref{sec3:qm concat}, we give a description of a basic $q^m$-concatenating construction for $q=2$. In Section \ref{sec4:R=2known}, we describe the known results, useful for this paper. In Section \ref{sec5:R=2new}, we obtain the new results.

\section{The main results}\label{sec2:main res}
Let the asymptotic covering density $\overline{\mu}(R)$ be as in \eqref{eq1:liminfdens}. Let $f(R)$ be the constant from Green's Open Problem 40 \cite[Section 5, Problem 40]{Green}. Let $t$ be an integer. Throughout the paper, let $\Delta(r,2)$ be the decrease of the known upper bounds on $\ell_2(r,2)$ provided by the new results.

As far as the authors know, for $R=2,r=2t$, the best (from the point of view of the asymptotic density $\overline{\mu}(2))$ known infinite family of binary covering $[n,n-r]_22$ codes is obtained in \cite{DMP-arXiv2025,DMP-IEEE2026} with the following parameters:
\begin{align*}
&R=2,\,r=2t,\,r=10,18,20,\,\T{and}\,r\ge28,\,n=26\cdot2^{r/2-4}-1,\,\overline{\mu}(2)\le1.32031 \db\\
&\T{\cite[equation (2.1), Theorem 5.7]{DMP-arXiv2025},\,\cite[equation (2.1), Theorem 5.7]{DMP-IEEE2026}}.
\end{align*}
This implies the following estimate, see \eqref{eq1:cov=sat}:
\begin{align*}
&R=2,\,r=2t,\,\ell_2(r,2)=s_2(r-1,1)\le26\cdot2^{r/2-4}-1,~f(2)\le1.32031.
\end{align*}

In this paper, these known results are improved. In Section \ref{sec5:R=2new}, Theorems \ref{th5:Wu code}, \ref{th5:r=18}, \ref{th5:r=28}, \ref{th5:R=2NewInfFam},  we obtain new infinite family of binary covering codes, whose parameters are noted in Theorem \ref{th2:InfFamNew}. Also, in Theorem \ref{th5:-3r=22,24,26} we obtain sporadic new covering codes, whose parameters are written in  Theorem \ref{th2:SporadNew}. The new code family and sporadic new codes provide the decrease $\Delta(r,2)=2^{r/2-5}$ of the known upper bounds on $\ell_2(r,2)$.

\begin{theorem}\label{th2:InfFamNew}
 There is a new infinite family of covering $[n,n-r]_2R$ codes such that:
\begin{align}\label{eq2:InfFamR=2New}
&R=2,~r=2t,~r=10,18,20,\T{ and }r\ge28,~t=5,9,10,\T{ and }t\ge14, \db\\
&n=51\cdot2^{r/2-5}-1= 51\cdot2^{t-5}-1,~\overline{\mu}(2)\le1.27002;\notag\db\\
&\ell_2(r,2)=s_2(r-1,1)\le51\cdot2^{r/2-5}-1,~\Delta(r,2)=2^{r/2-5}. \notag
\end{align}
Also, this infinite family provides the upper bound $f(2)\le1.27002$ on the constant $f(2)$ of the Green's Open Problem 40 \cite[Section 5, Problem 40]{Green}.
\end{theorem}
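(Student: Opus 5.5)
Theorem \ref{th2:InfFamNew} collects the results of Theorems \ref{th5:Wu code}, \ref{th5:r=18}, \ref{th5:r=28}, \ref{th5:R=2NewInfFam}, so the proof is an assembly in four steps: base codes for small codimensions, a doubling construction for the infinite tail, a density computation, and a comparison with the old family.

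\textbf{Step 1: starting codes.} The seed is Wu's $[50,40,3]_22$ code, whose length $50=51\cdot2^{0}-1$ matches the formula at $r=10$. We also need the partition of the column set of its parity check matrix into 10 subsets \cite{Wu2026}. This partition is exactly the extra structure that the $q^m$-concatenating constructions of Section \ref{sec3:qm concat} require: it guarantees that every syndrome in the enlarged space is a sum of at most two columns. We will verify the partition conditions directly, by computer if needed.

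\textbf{Step 2: building the family.}
\begin{itemize}
\item Apply a $q^m$-concatenating construction with $q=2$, $m=4$ to the seed. This raises the codimension by $8$ and multiplies $n+1$ by $2^4$, giving a $[51\cdot2^{4}-1,\,\cdot\,]_22$ code with $r=18$.
\item Construct the codes with $r=20$ and $r=28$ separately (Theorems \ref{th5:r=18}, \ref{th5:r=28}), again with partitioned column sets.
\item For the tail, use the basic construction with $m=1$. Each application adds $2$ to $r$ and sends $n+1$ to $2(n+1)$, so the relation $n+1=51\cdot2^{r/2-5}$ is preserved.
\end{itemize}
To iterate, the output code must again carry a partition of the required type. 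We will therefore check that the partition lifts, as in the known versions \cite{DMP-arXiv2025,DGMP-AMC}.

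Starting from $r=28,30$ and using steps of $2$ (possibly together with $r=18,20$ plus $8$), we cover all even $r\ge28$. Theorem \ref{th5:R=2NewInfFam} carries out this step. The gaps $r=22,24,26$ explain why the family is stated only for $r=10,18,20$ and $r\ge28$.

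\textbf{Step 3: the density.} With $n=51\cdot2^{t-5}-1$ and $r=2t$,
\[
\mu=\frac{1+n+\binom{n}{2}}{2^{2t}}\sim\frac{n^2}{2\cdot 2^{2t}}\to\frac{51^2}{2\cdot 2^{10}}=\frac{2601}{2048}\approx1.27002 .
\]
Since this family exists for infinitely many $r$, its $\overline{\mu}(2)$ is at most this value. Hence $f(2)\le1.27002$ by the remark after Green's problem, and $\ell_2(r,2)=s_2(r-1,1)\le n$ by \eqref{eq1:cov=sat}.

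\textbf{Step 4: the improvement.} Comparing with the old bound $26\cdot2^{r/2-4}-1=52\cdot2^{r/2-5}-1$ gives $\Delta(r,2)=2^{r/2-5}$.

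\textbf{Main obstacle.} The hard part is Step 2: showing that the partition conditions survive each concatenation step, so that the construction can be iterated indefinitely. I would handle this by induction on $r$. The inductive step would use an explicit description of the new column subsets, built from the old subsets paired with elements of $\F_{2^m}$, and would check case by case that every syndrome $(\ub,\wb)$ is covered.
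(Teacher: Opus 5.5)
Your Steps 1, 3 and 4 are fine: the limit $51^2/2^{11}=2601/2048\approx1.27002$ and $\Delta(r,2)=2^{r/2-5}$ are correct. The tail of Step 2, however, fails.

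\textbf{Why $m=1$ cannot work.} In Construction QM, columns from distinct subsets of the $2$-partition of $\Hb_0$ must receive distinct indicators in $\F_{2^m}\cup\{*\}$, and $\QM_2^2$ even requires $2^m\ge p(\Hb_0)$. With $m=1$ you can therefore afford at most two (or three) subsets. No code of this family admits such a partition. Count $\zb$, the $n$ single columns, and the sums $\hb_i+\hb_j$ over distinct subsets: this gives at most $1+n+\frac{p-1}{2p}n^2$ vectors. Since $n^2\approx(51/32)^2\,2^r\approx2.54\cdot2^r$, this forces $p(\Hb)\ge5$. Nor does the partition lift with bounded size. $\QM_2^2$ only guarantees $p(\Hb_C)\le2^{m+1}+1$, so the next step needs $m'\ge m+2$. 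Meanwhile, every element of $\F_{2^m}$ must actually occur as an indicator; this is what covers the syndromes $(\zb,\ub_1,\ub_2)^{tr}$ with $\ub_1\ne0$. That caps $2^m\le n_0$, i.e.\ $m\le t_0$. So there is no induction on $r$ in steps of $2$. Your own argument shows this: if the $m=1$ doubling worked, it would apply to the $[50,40]_22$ seed and produce codes for $r=12,14,16$ and $r=22,24,26$ as well. There would then be no gaps at all, contradicting exactly the restriction you try to explain.

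\textbf{What is actually needed.} The paper picks, for each target $r$, a starting code of length $\Theta(r_0)$ and an $m$ in the window $\lambda_0+1\le m\le t_0$, where $p(\Hb_0)\le2^{\lambda_0}+2$ (Construction $\QM_5^2$, Theorem \ref{th5:n0Theta}). This window is what creates the gaps. From $r_0=10$ only $m\le5$ is allowed. From $r_0=18$ with $p=2^5+1$ one needs $m\ge6$. From $r_0=20$ with $p=2^6+1$ one needs $m\ge7$. Consequently $r=28$ is out of reach of $\QM_2^2$, and you need the idea you omitted.

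\begin{itemize}
\item In the $r=18$ step (Theorem \ref{th5:r=18}), Wu's partition is refined to $16$ subsets so that the dependent triple $(734,491,821)$ becomes three singletons. This yields a $33$-subset partition of the $r=18$ matrix containing the singletons $\{(491,\zb_8)^{tr}\},\{(734,\zb_8)^{tr}\},\{(821,\zb_8)^{tr}\}$. Note that $r=20$ is not from Theorems \ref{th5:r=18} or \ref{th5:r=28}; it is plain $\QM_2^2$ with $m=5$ on the seed.
\item Construction $\QM_4^2$ with $m=5$ then gives the other $32$ subsets the field indicators and gives $(491,\zb_8)^{tr}$ the indicator $*$. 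The one case this breaks, $\pib=(491,\zb_8)^{tr}$, is repaired by writing it as $(734,\zb_8)^{tr}+(821,\zb_8)^{tr}$, two columns from distinct subsets.
\end{itemize}

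This $r=28$ code is not optional. Besides $r=28$ itself, it is the starting code for $r=42,44$ (with $m=7,8$). No other available starting code in the family has a window containing the required $m$. Without it, your family would not cover all even $r\ge28$.
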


\begin{theorem}\label{th2:SporadNew}
There are new covering $[n,n-r]_2R$ codes such that:
\begin{align}\label{eq2:NewSporadic}
&R=2,~r=22,24,26,~n=52\cdot2^{r/2-5}-3;\\
&\ell_2(r,2)=s_2(r-1,1)\le52\cdot2^{r/2-5}-3,~\Delta(r,2)=2^{r/2-5}.\notag
\end{align}
\end{theorem}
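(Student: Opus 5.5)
The statement to prove is Theorem \ref{th2:SporadNew}: for $r=22,24,26$ there are $[52\cdot2^{r/2-5}-3,\,52\cdot2^{r/2-5}-3-r]_2 2$ codes. Note that $52\cdot 2^{r/2-5}-3$ exceeds the infinite-family length $51\cdot2^{r/2-5}-1$ of Theorem \ref{th2:InfFamNew}. These three values of $r$ are exactly the gap $r=22,24,26$ left by that family, which needs $r\ge28$. The natural plan is therefore to build the codes with the same $q^m$-concatenating machinery (the basic construction of Section \ref{sec3:qm concat}, $q=2$, $R=2$), but from a seed and partition that fit these small codimensions, accepting a slightly worse length.

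\textbf{Step 1: fix the seeds.} The main seed is Wu's $[50,40,3]_22$ code with its partition of the column set of the parity check matrix into 10 subsets. A partition of this kind is what lets the concatenation glue blocks of redundancy $r_0$ to a $2^m$-ary Hamming-like outer structure. For $r=22,24,26$ we need the redundancy to grow by $12,14,16$ over the seed's $r_0=10$. Thus $m$ must be chosen so that the gain per doubling matches $r/2-5$: the length should roughly double with each increase of $r$ by 2, starting from a base near $52\cdot 2^{6}$ at $r=22$.

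\textbf{Step 2: apply the construction.} I would use the version of the construction in which each column $\hb$ of the seed matrix is replaced by the set of columns $\binom{\hb}{\beta_j}$, where the $\beta_j$ run over a suitable subset of $\F_{2^m}$ (viewed in $\F_2^m$) or over $\F_2^m$ shifted by partition-dependent elements. A small number of extra columns would also be appended. This should give length $(50+2)\cdot 2^{r/2-5}$ minus a few columns. The ``$-3$'' presumably comes from deleting three redundant columns (zero-type or repeated ones) after concatenation, exactly as in Theorem \ref{th5:-3r=22,24,26}.

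\textbf{Step 3: verify covering radius 2.} Take an arbitrary syndrome $(\ub,\wb)$, with $\ub$ of length $r_0$ and $\wb$ in the extension part. If $\ub$ is a single seed column $\hb$, cover it by one column from its block, adjusting $\wb$ by a second column chosen from a block whose upper part is zero or cancels. If $\ub=\hb_1+\hb_2$ with $\hb_1,\hb_2$ in distinct partition classes, solve $\beta_1+\beta_2=\wb$ (or the field version $\beta_1\xi_1+\beta_2\xi_2=\wb$) inside the two classes. The partition condition is used precisely to guarantee such a solution for every $\wb$.

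\textbf{Step 4: count and compare.} Finally, check that the resulting lengths are smaller by $2^{r/2-5}$ than the previously known bounds for these $r$, which gives $\Delta(r,2)=2^{r/2-5}$.

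\textbf{Main obstacle.} The hardest step is Step 3 for syndromes with $\ub=\ozero$ or with $\ub$ a sum of two columns of the same class. These cases require either auxiliary columns or a second, smaller seed (for example, a code for $r_0=12$ or an intermediate code with its own partition) with verified partition properties. Here I would rely on a computer-checked partition, similar to Wu's, of an intermediate code, and confirm the covering property computationally for the three specific $r$.
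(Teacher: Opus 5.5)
There is a genuine gap. You pick the right seed ($C_{Wu}$ with $\Ps_{Wu}$, $p(\Hb_0)=10$) and, implicitly, the right $m=r/2-5\in\{6,7,8\}$. But you never specify the construction, and the ingredient that yields exactly $52\cdot2^m-3$ is missing.

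The paper's proof applies Construction $\QM_1^2$ (Theorem~\ref{th4:-3}) directly to $C_{Wu}$:
\begin{itemize}
\item Each seed column $\hb_j$ receives one fixed indicator $\beta_j\in\{*\}\cup\F_{2^m}\setminus\{1\}$, with $\beta_i\neq\beta_j$ whenever $\hb_i,\hb_j$ lie in distinct classes of $\Ps_{Wu}$.
\item Each $\hb_j$ is replaced by the $2^m$ columns $(\hb_j,\xi,\beta_j\xi)^{tr}$, $\xi\in\F_{2^m}$. This is not what you wrote: columns $(\hb,\beta_j)^{tr}$ with varying $\beta_j$ would raise the redundancy by $m$ instead of $2m$.
\item The appended matrix $\D$ has zero top rows and lower $2m$ rows $\left[\begin{smallmatrix}\Wb_m\setminus\wb&\wb&\zb_m\\ \zb_m&\wb&\Wb_m\setminus\wb\end{smallmatrix}\right]$, i.e.\ $2(2^m-1)-1=2^{m+1}-3$ columns.
\end{itemize}
Hence $n=50\cdot2^m+2^{m+1}-3=52\cdot2^{r/2-5}-3$ with $r=10+2m$. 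The only hypothesis to check is $2^m\ge p(\Hb_0)=10$. The shorter route through Theorems~\ref{th4:DDLR=2} and \ref{th5:n0Theta} is unavailable here, because $\Bs=\F_{2^m}$ forces $2^m\le n_0=50$. Finally, $\Delta(r,2)$ comes from comparing with $\widehat{\Phi}(r)=53\cdot2^{r/2-5}-3$ via Lemma~\ref{lem5:Delta}.

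Your explanation of the ``$-3$'' as deleting three redundant columns misses the point. Two of the savings are just the absent zero columns of the two Hamming blocks. The third comes from replacing the two non-redundant columns $(\zb,\wb,\zb)^{tr}$ and $(\zb,\zb,\wb)^{tr}$ by their sum $(\zb,\wb,\wb)^{tr}$. This is legitimate only because $1\notin\Bs$:
\begin{itemize}
\item If $\pib=\hb_j$ and $\ub_2+\beta_j\ub_1=\wb$, take $\xi=\ub_2/\beta_j$. The remainder is $(\zb,\wb/\beta_j,\zb)^{tr}$, which lies in $\D$ exactly when $\beta_j\ne1$.
\item The cases $\beta_j=0$ and $\beta_j=*$ are handled separately using $(\zb,\wb,\wb)^{tr}$.
\end{itemize}

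Your ``main obstacle'' does not exist. By the definition of a $2$-partition, every $\pib\in\F_2^{r_0}$ is a sum of at most two columns from distinct classes, so same-class sums never arise. That is also what gives $\beta_i\ne\beta_j$ and makes the system $\xb+\yb=\ub_1$, $\beta_i\xb+\beta_j\yb=\ub_2$ solvable. The case $\pib=\zb$ is covered by $\D$ alone. No intermediate seed and no further computer search are needed. Appealing to an unspecified computer-checked intermediate code would not constitute a proof in any case.
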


To obtain the new codes, in Section \ref{sec5:R=2new}, we propose new versions of the $q^m$-concatenating constructions, see Theorems \ref{th5:r=18}--\ref{th5:n0Theta}.

\section{Basic $\boldsymbol{q^m}$-concatenating Construction QM for $\boldsymbol{q=2}$}\label{sec3:qm concat}

Throughout this paper, all matrices and columns are binary. An element of $\F_{2^m}$ written in a binary matrix denotes an $m$-dimensional binary column, that is a binary representation of this element; and vice versa, an $m$-dimensional binary column can be
viewed as an element of $\F_{2^m}$.

\begin{definition} \label{def31:partition} \cite[Definition 2.1]{DDL-IEEE}, \cite[Definition 2.1]{DGMP-AMC}
Let $\Hb\in\F_2^{\,r\times n}$ be a parity check $r\times n$ matrix
of an $ [n,n-r]_2R$ code $C$ and let $0\le\ell\le R$.
\begin{description}
  \item[(i)] A partition of the column set of the matrix
$\Hb$ into nonempty subsets is called an  $(R,\ell)$\emph{-partition} if every column of $\F_2^{\,r}$ (including
the zero column) is equal to a sum of at least $\ell $ and at most $R$ columns of
$\Hb$ belonging to distinct subsets.  For an $(R,0)$-partition we can formally treat the zero column as the sum of 0 columns. We use the term ``$R$-\emph{partition}'' when the value of $\ell$ is not relevant or not known.

  \item[(ii)] If $\Hb$ admits an $(R,\ell)$-partition, the code $C$ is called an $(R,\ell) $\emph{-object} of the space $\F_2^{\,n}$ and is denoted as an $[n,n-r]_2R,\ell $ or an $[n,n-r,d]_2R,\ell$ code. It is not necessary that $\ell $ is the greatest value with the properties considered. Therefore, any $(R,\ell )$-partition with $\ell\ge1$ is also an $(R,\ell_1)$-partition with $\ell_1=0,1,\ldots,\ell -1$.  So, an $[n,n-r]_2R,\ell$ code with $\ell\ge1$ is also an $[n,n-r]_2R,\ell_1$ code with $\ell_1 = 0, 1,\ldots,\ell - 1$.
\end{description}
\end{definition}

 As will be shown later in the paper, using $(R,\ell)$-objects with $\ell\ge1$ allows us to improve parameters of new codes obtained by the $q^m$-concatenating constructions.

Let $p(\Hb,\ell;\Ps)$ be the number of subsets in an $(R,\ell)$-partition $\Ps$ of a parity check matrix $\Hb$ of an $[n,n-r]_2R,\ell$ code. Obviously, $R \le p(\Hb,\ell; \Ps)\le n$. If $p(\Hb,\ell;\Ps)=n$, i.e. every subset contains one column, then the partition $\Ps$ is called \emph{trivial}. The trivial partition of a parity check matrix of an $ [n,n-r]_2R,\ell $ code is an $(R,\ell )$-partition. We use the notation $p(\Hb,\ell)$ when the partition $\Ps$ is not relevant. In other words, the notation $p(\Hb,\ell)$ means that the matrix  $\Hb$ admits an $(R,\ell)$-partition into $p(\Hb,\ell)$ subsets. For the notations $p(\Hb,\ell;\Ps)$ and $p(\Hb,\ell)$, the value of $R$  is defined by the context.

\begin{lemma}\label{Lem31:d&elll}
\emph{\cite{Dav90PIT,Dav95,DDL-IEEE}, \cite[Lemma 2.2]{DGMP-AMC}} An $[n,n-r,d]_2R$ code
is an $[n,n-r,d]_2R,\ell $ code with $\ell \geq 1$ if and only if $d\leq
R. $ If $d>R$ the maximum possible value of $\ell $ is zero$.$
\end{lemma}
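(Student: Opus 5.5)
The plan is to prove both directions directly from Definition \ref{def31:partition}. The key observation is that an $(R,\ell)$-partition with $\ell\ge1$ differs from an $R$-partition only in how the zero column is treated: it must be a sum of at least one and at most $R$ columns from distinct subsets. For a nonzero column $\ov$, any representation uses at least one column, so the condition ``at least $\ell=1$'' is automatic. Hence everything reduces to the zero column. Moreover, since $\ell\ge1$ implies $\ell_1=1$ by Definition \ref{def31:partition}(ii), it suffices to characterize when $\ell=1$ is attainable.

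\emph{Necessity.} Suppose $\Hb$ admits an $(R,\ell)$-partition with $\ell\ge1$. Then $\ozero$ equals a sum of $j$ columns of $\Hb$, with $1\le j\le R$, taken from distinct subsets. These $j$ columns are distinct positions of $\Hb$, so the vector with ones in exactly those positions is a nonzero codeword of weight $j$. Therefore $d\le j\le R$.

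\emph{Sufficiency.} Suppose $d\le R$, and take a codeword of weight $d$ with support $\{i_1,\dots,i_d\}$. Then $\hb_{i_1}+\dots+\hb_{i_d}=\ozero$. Choose the trivial partition, in which every column is its own subset. Since $C$ has covering radius $R$, Definition \ref{Def1:CoverRad}(ii) shows that every nonzero column of $\F_2^{\,r}$ is a sum of at most $R$ columns of $\Hb$. We may take these columns to be distinct, because over $\F_2$ repeated columns cancel in pairs, and a nonzero vector cannot become an empty sum after cancellation. So every nonzero column is a sum of between $1$ and $R$ columns from distinct subsets. The zero column is the sum of the $d$ distinct columns above, and $1\le d\le R$. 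Hence the trivial partition is an $(R,1)$-partition.

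\emph{The case $d>R$.} The necessity argument shows that no $(R,\ell)$-partition with $\ell\ge1$ exists. The trivial partition is always an $(R,0)$-partition, so the maximum possible $\ell$ is $0$.

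The only delicate point is the reduction to distinct columns in the sufficiency step, together with checking that ``distinct subsets'' forces distinct positions in the necessity step. Both are immediate over $\F_2$, so I expect no real obstacle.
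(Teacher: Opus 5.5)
Your proof is correct and complete: reducing to $\ell=1$ via Definition~\ref{def31:partition}(ii), extracting a codeword of weight $j\le R$ from the representation of the zero column, and using the trivial partition as an $(R,1)$-partition when $d\le R$ (and as an $(R,0)$-partition when $d>R$) are all sound. The paper gives no proof of its own, since it only cites the lemma from the literature, so there is nothing to compare against; your reading of ``with $\ell\ge 1$'' as ``for some $\ell\ge1$'', equivalently $\ell=1$, is the right interpretation.
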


Throughout the paper, all examined codes have minimum distance $d\ge3$. So, all codes of covering radius $R=2$ are $(2,0)$-objects.
Therefore, for codes of covering radius $R=2$, for short, we use the notations $[n,n-r]_22$ and $[n,n-r,d]_22$ codes, 2-objects, 2-partition, $p(\Hb;\Ps)$, $p(\Hb)$.

The $q^m$-concatenating constructions were proposed in 1990 in \cite{Dav90PIT} and then investigated and developed in numerous works, see e.g. \cite{Handbook-coverings,Dav95,DDL-ACCT2-1990,DDL-ACCT3-1992,DDL-IEEE,DFMP-LO-CovCodIEEE2005,DGMP-AMC,%
DMP-arXiv2025,DMP-IEEE2026,DavOst-IEEE2001,DavOst-DESI2010,Giul2013Survey}, \cite[Section 5.4]{CHLL-bookCovCod}, \cite[Supplement]{Struik}, and the references therein. The term ``$q^m$-concatenating constructions'' is  not always used in these papers. Here we named the basic construction by QM because it sounds similar to $q^m$. The name QM is used in the literature, see e.g. \cite[Section 2]{DGMP-AMC}, \cite{DMP-arXiv2025,DMP-IEEE2026}.

Starting from an $[n_0,n_0-r_0]_qR$ code of length $n_0$, the $q^m$-concatenating constructions yield  new $[n,n-(r_0+Rm)]_qR$ codes with the same covering radius $R$ and length $ n=q^m n_0+N_m$, where $m$ must satisfy some conditions and $N_m\le R\theta _{m,q}$, with $\theta_{m,q}=(q^m-1)/(q-1)$, $\theta_{m,2}=2^m-1$.
The covering density of the new codes is almost the same (slightly greater) as for the starting code. Using the constructions iteratively, we can obtain infinite families of new $ [n,n-(r_0+Rm)]_qR$ codes where $m$ ranges over an infinite set of integers.

\textbf{Basic Construction QM for $\boldsymbol{q=2}$.}
We use the results and ideas of \cite{Dav90PIT,Dav95,DDL-ACCT3-1992,DDL-IEEE,DGMP-AMC,DMP-arXiv2025,DMP-IEEE2026}.

Let $\Hb_0$ be a parity check $r_0\times n_0$ matrix of an $[n_0,n_0-r_0]_2R,\ell_0$ \emph{starting} code $C_0$,
\begin{equation}\label{eq32:H0}
  \Hb_0=[\hb_1\hb_2\ldots\hb_{n_0}],~\hb_j\in\F_2^{\,r_0},~j=1,\ldots,n_0,
\end{equation}
where $\hb_j$ is a binary $r_0$-positional column. Assume that $\Hb_0$ admits a starting $(R,\ell _0)$-partition $\Ps_0$ into $p(\Hb_0,\ell _0)$ subsets. Let $m\geq 1$ be an integer depending on $\Ps_0$ and $n_0$. With every column $\hb_j$ we associate an element $\beta_j$ $\in \F_{2^m}\cup \{\ast\}$ so that $\beta_i\neq \beta_j$
if columns $\hb_i$ and $\hb_j$ belong to \emph{distinct }subsets of $ \Ps_0$. If $\hb_i$ and $\hb_j$ belong to the same subset we are free to assign either $\beta_i=\beta_j$ or $\beta_i\ne \beta_j$. We call $\beta_j$ an \emph{indicator} of the column $\hb_j$. Let $\Bs=\{\beta_1,\beta _2,\ldots,\beta_{n_0}\}$ be an \emph{indicator set}. The condition $\#\Bs \geq p(\Hb_0,\ell _0)$ is necessary.

Let $\D$ be an $(r_0+Rm)\times N_m$ matrix with \emph{the $r_0$ top rows equal to the zero vector} and with $N_m\le (R-\ell_0)\theta_{m,2}$. For $j=1,2,\dots,n_0$, we denote by  $\Ab(\hb_j,\beta_j)$ an $(r_0+Rm)\times 2^m$ matrix, in which the \emph{$r_0$ top rows are formed by $2^m$-fold repeating of the column $\hb_j$}. Let $\mathbf{0}_{v}$ be the zero matrix with $v$ rows where the number of columns (possibly, 1) is clear from the context.

Finally, define a new code $C$ as an $[n,n-(r_0+Rm)]_2R_C$ code with $n=2^m n_0+N_m$ and the parity check $(r_0+Rm)\times n$ matrix $\Hb_C$ of the following form:
\begin{align}
& \Hb_C \triangleq\left[ \D~\Ab(\hb_1,\beta_1)~\Ab(\hb_2,\beta_2)~\ldots ~
\Ab(\hb_{n_0},\beta_{n_0})\right],\label{eq32:QM-H}\db\\
&\Ab(\hb_j,\beta_j)\triangleq\left[ \renewcommand{\arraystretch}{1.1}
\begin{array}{cccc}
\hb_j & \hb_j & \mathbf{\cdots } & \hb_j \\
\xi _1 & \xi _2 & \cdots & \xi _{2^m} \\
\beta _j\xi _1 & \beta _j\xi _2 & \cdots & \beta _j\xi _{2^m} \\
\beta _j^{2}\xi _1 & \beta_j^{2}\xi _2 & \cdots & \beta _j^{2}\xi
_{2^m} \\
\vdots & \vdots & \vdots & \vdots \\
\beta _j^{R-1}\xi _1 & \beta_j^{R-1}\xi _2 & \cdots & \beta
_j^{R-1}\xi _{2^m}
\end{array}
\right] \T{if }\beta _j\in \F_{2^m},\label{eq32:Aj}\db\\
&\Ab(\hb_j,\beta_j)\triangleq\left[
\begin{array}{cccc}
\hb_j & \hb_j & \mathbf{\cdots } & \hb_j \\
\zb_{(R-1)m} & \zb_{(R-1)m} & \cdots & \zb_{(R-1)m} \\
\xi _1 & \xi _2 & \cdots & \xi _{2^m}
\end{array}
\right] \T{ if }\beta _j=\ast ,  \label{eq32:A*}\db\\
&\phantom{\Ab(\hb_j,\beta_j)\triangleq~}\{\xi _1,\xi _2,\ldots ,\xi _{2^m}\}=\F_{2^m},~\xi_1=0,~\xi_2=1. \label{eq32:xi}
\end{align}

If $m$, $\D$, and $\Bs$ are carefully chosen, then the covering radius $R_C$ of the new code $C$ is equal to the
covering radius $R$ of the starting code $C_0$; see  Sections \ref{sec4:R=2known}, \ref{sec5:R=2new}, where we use
the following notations:
\begin{align}
  & \Wb_m\T{ is the parity check $m\times(2^m-1)$ matrix of the }[2^m-1,2^m-1-m]_21\label{eq32:Wbm}\db\\
  &\T{Hamming code};\notag\db\\
  &\Hcb_{2m}\T{ is a parity check }2m\times n_{2m} \T{ matrix of an }[n_{2m},n_{2m}-2m]_22\T{ code }
\Vc_{2m};\label{eq32:Hcb2m}\db\\
  &\QM_j^R\T{ is the $j$-th version of Construction QM for covering radius }R;\notag\db\\
  &\T{(the overline over QM is used to avoid confusion with notations from \cite{DMP-arXiv2025,DMP-IEEE2026})}.\notag
  \end{align}

\section{The known results on binary linear codes of covering radius 2 and 1-saturating sets}\label{sec4:R=2known}

We introduce the notation.
\begin{align}
&\phi(r)=\phi(2t)\triangleq27\cdot2^{r/2-4}-1 = 27\cdot2^{t-4}-1  \T{ for } r=2t;\label{eq4:notatKnwR=2even}\db\\
&\Phi(r)=\Phi(2t)\triangleq26\cdot2^{r/2-4}-1= 26\cdot2^{t-4}-1 \T{ for }r=10,18,20,~r=2t\ge28;\label{eq4:notatKnwR=2-r-even1}\db\\
&\widehat{\Phi}(r)=\widehat{\Phi}(2t)\triangleq26.5\cdot2^{r/2-4}-3= 26.5\cdot2^{t-4}-3 \T{ for } r=2t=22,24,26;\label{eq4:notatKnwR=2-r-even2}\db\\
&\vartheta(r)=\vartheta(2t -1)\triangleq5\cdot2^{(r-3)/2}-1= 5\cdot2^{t-2}-1   \T{ for }  r=2t -1.\label{eq4:notatKnwR=2odd}\db\\
&\D_1=\left[\begin{array}{c}
 \zb_{r_0+m}\\
 \Wb_m
 \end{array}\right].
\end{align}
Note that in \cite{DMP-arXiv2025,DMP-IEEE2026} the function $\vartheta(r)$ is denoted as $f(r)$. Here we would like to avoid confusion with the constant $f(R)$ from \cite{Green}.

\begin{theorem}\label{th4:KnownFamR=2}
There exist the infinite families of binary linear $[n,n-r]_2R$ covering codes with growing codimension $r$ and parameters as in \eqref{eq4:KnwFamR=2even r}--\eqref{eq4:KnwFamR=2odd r} and the sporadic covering codes with parameters as in \eqref{eq4:KnwSporadic}. The codes provide upper bounds on the length functions $\ell_2(r,2)$ and on the smallest sizes $s_2(r-1,1)$ of the $1$-saturating sets in $\PG(r-1,2)$ such that $\ell_2(r,2)=s_2(r-1,1)\le n$, see \eqref{eq1:cov=sat}. The code family \eqref{eq4:KnwFamR=2even r1} provides the upper bound on the constant for Green's Open Problem 40 \cite[Section 5, Problem 40]{Green} $f(2)\le1.32031$.
\begin{description}
  \item[(i)]\cite[p.\ 53]{DDL-ACCT3-1992},\cite[Example 3.1,\,equation (1.3)]{DDL-IEEE},\cite[Theorem 5.4.27(i),\,equation (5.4.30)]{CHLL-bookCovCod}
\begin{align}
&R=2,~r=2t\ge8,~t\ge4;~n=27\cdot2^{r/2-4}-1 =27\cdot2^{t-4}-1=\phi(r),
\label{eq4:KnwFamR=2even r}\db\\
&\overline{\mu}(2)\le1.4238;~\ell_2(r,2)=s_2(r-1,1)\le\phi(r).\notag
\end{align}

  \item[(ii)] \cite[Theorems 5.4, 5.7, equations (5.7), (5.8), (5.15)]{DMP-arXiv2025},\cite[Theorems 5.4, 5.7, equations (5.7), (5.8), (5.15)]{DMP-IEEE2026}
\begin{align}\label{eq4:KnwFamR=2even r1}
&R=2,~r=2t,~r=10,18,20,\T{ and }r\ge28,~t=5,9,10,\T{ and }t\ge14,\db\\
&n=26\cdot2^{r/2-4}-1= 26\cdot2^{t-4}-1=\Phi(r),~\overline{\mu}(2)\le1.32031; \notag\db\\
&\ell_2(r,2)=s_2(r-1,1)\le\Phi(r).\notag
\end{align}

  \item[(iii)] \cite[Theorem 1, equation (5)]{GabDavTombR=2}, \cite[Remark 3.1]{DDL-IEEE}, \cite[Theorem 5.4.27(ii)]{CHLL-bookCovCod}
\begin{align}
&R=2,~r=2t-1\ge3,~t\ge2,~n=5\cdot2^{(r-3)/2}-1=5\cdot2^{t-2}-1=\vartheta(r),\label{eq4:KnwFamR=2odd r}\db\\
&\overline{\mu}(2)\le1.5625;~\ell_2(r,2)=s_2(r-1,1)\le\vartheta(r).\notag
\end{align}

  \item[(iv)]\cite[Theorem 5.5]{DMP-arXiv2025},\cite[Theorem 5.5]{DMP-IEEE2026}
\begin{align}\label{eq4:KnwSporadic}
&R=2,~r=22,24,26,~n=26.5\cdot2^{r/2-4}-3=\widehat{\Phi}(r),~\mu(22,2)\le1.36956,\db\\
&\mu(24,2)\le1.37057,~\mu(26,2)\le1.37107;~\ell_2(r,2)=s_2(r-1,1)\le\widehat{\Phi}(r).\notag
\end{align}
\end{description}
\end{theorem}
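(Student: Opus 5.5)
This theorem collects known results, so the proof will be a matter of recalling, for each of the four items, the construction in the cited sources and checking that it fits the framework of Construction QM in Section~\ref{sec3:qm concat}. Throughout I will use the correspondence \eqref{eq1:cov=sat}: every $[n,n-r]_22$ code immediately gives $\ell_2(r,2)=s_2(r-1,1)\le n$. The density bounds then come from evaluating \eqref{eq1:CovDensity} with $R=2$. For a family with $n\approx c\,2^{r/2}$ this gives
\[
\mu\approx\frac{1+n+n(n-1)/2}{2^{r}}\to \frac{c^2}{2},
\]
so the main work is to exhibit the codes.

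For item (i), I take a suitable small starting code $C_0$ with an explicit $2$-partition: a code with $r_0=8$, $n_0=26$ in the spirit of \cite{DDL-IEEE}. I then apply Construction QM with $R=2$, indicator set $\Bs\subseteq\F_{2^m}\cup\{\ast\}$ and $\D=\D_1$, where $N_m=2^m-1$, using a $(2,1)$-object property to allow $N_m\le(R-\ell_0)\theta_{m,2}$. This gives $n=2^m n_0+2^m-1=27\cdot 2^m-1$ and $r=r_0+2m$, which matches $\phi(r)$. The density limit is $27^2/2^9\approx1.4238$.

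The key step is to verify that $R_C=2$ by case analysis on a syndrome $(\ub_0,\ub_1,\ub_2)\in\F_2^{r_0}\times\F_{2^m}^2$.
\begin{itemize}
\item If $\ub_0$ is the sum of two columns $\hb_i,\hb_j$ from distinct subsets, then $\beta_i\ne\beta_j$, and I solve the $2\times2$ Vandermonde system $\xi+\xi'=\ub_1$, $\beta_i\xi+\beta_j\xi'=\ub_2$, handling the indicator $\ast$ separately.
\item If $\ub_0$ is a single column or zero, I use one column of $\Ab$ together with one column of $\D_1$, which covers the remaining $\ub_2$ via the Hamming matrix $\Wb_m$.
\end{itemize}
The constraint $\#\Bs\le 2^m+1$ bounds $m$ below, and this is why only large $t$ are reached at first. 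Small $t$ are filled in by direct sporadic codes.

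For items (ii) and (iv), I rely on \cite{DMP-arXiv2025}. Their seeds with better length-to-$2^{r_0/2}$ ratio, for example a code giving $26\cdot2^{t-4}-1$, are fed into refined QM versions with good partitions. The sporadic $r=22,24,26$ codes come from the same construction with a slightly different $\D$, giving $-3$ in place of $-1$.

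For item (iii), I use the $[4,1]_22$-like seeds of \cite{GabDavTombR=2}, doubled iteratively, so that $n=5\cdot2^{t-2}-1$ and $\mu\to25/16=1.5625$.

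The main obstacle is confirming that each seed has the needed partition size $p(\Hb_0)$ relative to $2^m+1$ for all claimed $r$. Since these results are quoted, I would cite the partition data from the original papers rather than recheck them.
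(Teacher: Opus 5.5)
This theorem only collects published results. The paper gives no argument for it beyond the references attached to each item, so your closing move (cite the original papers) is the whole proof. Your density computation $\mu\to c^2/2$, giving $729/512$, $676/512$ and $25/16$, is correct. The trouble is the verification you present as ``the key step'': written out, parts of it would fail.

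For $\ub_0=\zb$ you cannot use one column of an $\Ab$-block plus one column of $\D_1$. Every column of $\Ab(\hb_j,\beta_j)$ has top part $\hb_j\neq\zb$, while every column of $\D_1$ has top part $\zb$. The correct step (see the proof of Theorem~\ref{th5:r=18}) has two cases. For $\ub_1\neq0$ it uses two columns of the same block, $(\hb_k,\zb,\zb)^{tr}+(\hb_k,\ub_1,\beta_k\ub_1)^{tr}$ with $\beta_k=\ub_2/\ub_1$. For $\ub_1=0$ it uses a single column of $\D_1$. This forces every element of $\F_{2^m}$ to be an indicator, i.e.\ $\Bs=\F_{2^m}$ and hence $2^m\le n_0$. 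That requirement, not a $(2,1)$-object property, is why $2^m-1$ columns in $\D$ suffice. A $(2,1)$-object is not available anyway: by the lemma in Section~\ref{sec3:qm concat}, $\ell_0\ge1$ forces $d\le2$. So the binding constraint is $n_0\ge 2^m\ge p(\Hb_0)$ of Theorem~\ref{th4:DDLR=2}. It is the upper bound that forces iteration (from $r_0=8$, $n_0=26$ you reach only $m\le4$), not the lower bound you mention.

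Likewise, allowing $\ast$ together with $\D=\D_1$ is not free. All columns of $\Ab(\hb_a,\ast)$ and of $\D_1$ have zero middle block, so $\ub_0=\hb_a$ with $\ub_1\neq0$ is not covered by your single-column recipe. This case requires $\hb_a$ to be a sum of two columns from distinct subsets, which is precisely the extra property exploited in Theorem~\ref{th5:r=28}.

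Finally, for (iii), suppose ``doubling'' means $n\mapsto 2n+1$, $r\mapsto r+2$ via $\QM_2^2$ with $m=1$. Then every step needs a $2$-partition into two subsets. This already fails at the second step: for a $[9,4]_22$ code at most $9+4\cdot5=29<31$ nonzero columns of $\F_2^5$ are reachable.

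Either drop these reconstructions and rely on the citations, as the paper does, or repair them along these lines.
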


\begin{theorem}\label{th4:-3}\cite[Theorem 5.1]{DMP-arXiv2025},\cite[Theorem 5.1]{DMP-IEEE2026}
$\mathbf{Construction~\overline{QM}_1^2}$. In Basic Construction \emph{QM} of Section $\ref{sec3:qm concat}$, let the starting code $C_0$  be an $[n_0,n_0-r_0]_22$ code with a parity check matrix $\Hb_0$ of the form \eqref{eq32:H0} admitting a $2$-partition
into $p(\Hb_0)$ subsets. We define a new $[n,n-r]_2R_C$ code $C$ by a parity check
matrix $\Hb_C$ of the form \eqref{eq32:QM-H}--\eqref{eq32:Wbm}, where the indicator set $\Bs$, parameter $m$, and the
 submatrix $\D$  are as follows:
\begin{align}\label{eq4:-3inp}
&\QM_1^2:~\Bs\subseteq\{*\}\cup\F_{2^m}\setminus\{1\},~2^m\ge p(\Hb_0),~\D=
\left[\begin{array}{ccc}
 \zb_{r_0}&\zb_{r_0}&\zb_{r_0}\\
 \Wb_m\setminus\wb&\wb&\zb_m\\
 \zb_m&\wb&\Wb_m\setminus\wb
  \end{array}\right],
\end{align}
$\Wb_m\setminus\wb$ is the matrix $\Wb_m$ without a column $\wb$.

Then the new code $C$ is a  binary linear $[n, n - r,3]_2R_C$ code with the parameters:
\begin{equation}\label{eq4:-3res}
\QM_1^2:~R_C=2,~n=2^m(n_0+2)-3,~r = r_0 + 2m.
\end{equation}
\end{theorem}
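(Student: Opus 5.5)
Our plan is to verify the covering property directly through Definition \ref{Def1:CoverRad}(ii): for each column $(\ub,x,y)^{tr}$ with $\ub\in\F_2^{\,r_0}$ and $x,y\in\F_{2^m}$, we show it is a sum of at most two columns of $\Hb_C$. The length and codimension are immediate from counting: $\D$ has $2(2^m-2)+1=2^{m+1}-3$ columns, so $n=2^mn_0+2^{m+1}-3=2^m(n_0+2)-3$ and $r=r_0+2m$. For the minimum distance, $\Hb_C$ has no zero and no repeated columns, so $d\ge3$. Three dependent columns occur already inside $\D$, since $\wb$, $\wb$ in the middle column and elements of $\Wb_m$ give sums of two columns of $\Wb_m$ equal to a third. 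Hence $d=3$.

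We split by $\ub$. \textbf{Case $\ub\ne\zb$.} By the $2$-partition of $\Hb_0$, $\ub=\hb_i$ or $\ub=\hb_i+\hb_j$ with $\hb_i,\hb_j$ in distinct subsets, so $\beta_i\ne\beta_j$.
\begin{itemize}
\item For a single column, we need $x$ and $y$ such that, when $\beta_i\in\F_{2^m}$, $(x,y)=(\xi,\beta_i\xi)$. This generally fails, so we add a column of $\D$ to fix the discrepancy. The columns of $\D$ realize every $(a,0)$ with $a\neq0,a\ne\wb$, every $(0,b)$ with $b\ne0,b\neq\wb$, and $(\wb,\wb)$. Taking $\xi=x$ and adding $(0,y+\beta_ix)$ works unless $y+\beta_ix=\wb$. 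In that case we take $\xi=x+\wb$ and need $y+\beta_i(x+\wb)=\wb$, that is, $y+\beta_ix+\beta_i\wb=\wb$, which together with $y+\beta_ix=\wb$ forces $\beta_i\wb=0$, impossible. So that choice succeeds. Alternatively we can use $(\wb,\wb)$ plus a column, with $\xi=x+\wb$ and residual $y+\wb+\beta_i(x+\wb)=\beta_i\wb+\wb\cdot$…; the exclusion $\beta\ne1$ is exactly what guarantees $\beta_i\wb\ne\wb$. When $\beta_i=\ast$ the argument is symmetric, with $x$ repaired by a column $(a,0)$.
\item For two columns, we solve $\xi+\xi'=x$ and $\beta_i\xi+\beta_j\xi'=y$. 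Since $\beta_i\ne\beta_j$ this linear system is uniquely solvable, and the case involving $\ast$ is triangular and likewise solvable.
\end{itemize}

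\textbf{Case $\ub=\zb$.} We either use two columns from the same block $\Ab(\hb_j,\beta_j)$, giving $(\zb,\xi+\xi',\beta_j(\xi+\xi'))$, i.e.\ all $(z,\beta_jz)$, or use columns of $\D$. Two columns of $\D$ give $(a+a',0)$ and $(0,b+b')$, which covers any $(x,0)$ and $(0,y)$ since $\Wb_m\setminus\wb$ sums cover everything, as well as $(a,b)$ with $a,b\notin\{0,\wb\}$ and $(a+\wb,\wb)$, $(\wb,b+\wb)$. This reaches every $(x,y)$ except ones like $(\wb,\wb)$, which is itself a column, and the remaining pairs $(x,y)$ with both entries nonzero, which are covered by $(x,0)+(0,y)$ whenever $x,y\ne\wb$ and otherwise by the $(\wb,\wb)$ combinations. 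The main obstacle is this bookkeeping around the special column $\wb$ and the excluded indicator $1$. I would organize it as a short table of cases ($x$ or $y$ equal to $0$, equal to $\wb$, or neither) and check each entry.
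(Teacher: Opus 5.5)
\textbf{Gap in the case $\ub=\hb_i$.} Your overall scheme is sound: split on the top part $\ub$, use the $2$-partition, solve the $2\times2$ system when $\ub=\hb_i+\hb_j$, and let $\D$ absorb $\ub=\zb$. The cases $\ub=\hb_i+\hb_j$ and $\ub=\zb$ are essentially right. The gap is in the single-column case $\ub=\hb_i$, which is exactly where the hypothesis $1\notin\Bs$ must be used.

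Take $\beta_i\in\F_{2^m}$ with $y+\beta_i x=\wb$. Your fallback $\xi=x+\wb$ leaves the residual $(\wb,\,\wb+\beta_i\wb)$. Since its first part is $\wb$, this is a column of $\D$ only if it equals $(\wb,\wb)$, i.e.\ only if $\beta_i\wb=0$, i.e.\ $\beta_i=0$. You derive exactly this condition and then conclude ``impossible, so that choice succeeds''. That is backwards: the choice fails for every $\beta_i\ne0$. Moreover $\beta_i=0$ is allowed in $\Bs$, and it is the one value for which the choice works. Your ``alternative'' is the same choice, and it leaves the residual $(0,\beta_i\wb)$ after adding $(\wb,\wb)$. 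This is nonzero for $\beta_i\ne0$ and would need a third column. So for $\beta_i\in\F_{2^m}\setminus\{0,1\}$ you have not shown that $(\hb_i,x,\beta_i x+\wb)^{tr}$ is covered. The remark that $\beta\ne1$ gives $\beta_i\wb\ne\wb$ points at the right fact but is attached to a construction that does not use it.

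\textbf{The missing step.} Use the first block of $\D$. Take $\xi=x+\beta_i^{-1}\wb$; then
\[
(\hb_i,\xi,\beta_i\xi)^{tr}+(\zb_{r_0},\beta_i^{-1}\wb,\zb_m)^{tr}=(\hb_i,x,\beta_i x+\wb)^{tr}=(\hb_i,x,y)^{tr}.
\]
The second column lies in $\D$ because $\beta_i^{-1}\wb\ne0$ and $\beta_i^{-1}\wb\ne\wb$, the latter precisely because $\beta_i\ne1$. For $\beta_i=1$, the column $(\hb_i,x,x+\wb)^{tr}$ is not a sum of a column of $\Ab(\hb_i,1)$ and at most one column of $\D$; this is why $1$ is excluded.

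The remaining subcases are as follows:
\begin{itemize}
\item For $\beta_i=0$ and $y=\wb$, use $\xi=x+\wb$ together with $(\zb_{r_0},\wb,\wb)^{tr}$.
\item For $\beta_i=\ast$ and $x=\wb$, repairing $x$ by a column $(a,0)$ fails, since $(\wb,0)$ is not a column. You need $\xi=y+\wb$ together with $(\zb_{r_0},\wb,\wb)^{tr}$. This is the analogue of the $\beta_i=0$ case, not a ``symmetric'' version of your general argument.
\end{itemize}

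\textbf{Minor points.} Saying $\Hb_C$ has no zero and no repeated columns requires $\Hb_0$ itself to have distinct nonzero columns, which is the paper's standing assumption $d\ge3$. The dependent triple inside the first block of $\D$ requires $m\ge3$.
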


\begin{theorem}\label{th4:DDLR=2} \cite[Theorem 3.1]{DDL-IEEE}
$\mathbf{Construction~\QM_2^2}$. In Basic Construction \emph{QM} of Section $\ref{sec3:qm concat}$, let the starting code $C_0$  be an $[n_0,n_0-r_0]_22$ code with a parity check matrix $\Hb_0$ of the form \eqref{eq32:H0} admitting a $2$-partition
into $p(\Hb_0)$ subsets.  We define a new $[n,n-r]_2R_C$ code $C$ by a parity check
$r\times n$ matrix $\Hb_C$ of the form \eqref{eq32:QM-H}, \eqref{eq32:Aj}, \eqref{eq32:xi}, \eqref{eq32:Wbm}, where the indicator set $\Bs$, parameter $m$, and the $(r_0+2m)\times ( 2^m-1)$ submatrix $\D$  are as follows:
\begin{equation}
\QM_2^2:~\Bs=\F_{2^m},~n_0\ge2^m\ge p(\Hb_0),~
 \D=\D_1. \label{eq4:CaseA2inp}
\end{equation}
Then the new code $C$ is an $[n, n - r]_2R_C$ code with the parameters:
\begin{equation}
\QM_2^2:~R_C=2,~n=2^m(n_0+1)-1,~r = r_0 + 2m,~p(\Hb_C)\le 2^{m+1}+1.\label{eq4:CaseA2res}
\end{equation}
\end{theorem}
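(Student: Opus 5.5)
The plan is to verify directly, using Definition~\ref{Def1:CoverRad}(ii), that every column $\overline{s}=(\overline{u},\overline{x},\overline{y})^{tr}$ of $\F_2^{r_0+2m}$, with $\overline{u}\in\F_2^{r_0}$ and $\overline{x},\overline{y}\in\F_{2^m}$, is a sum of at most two columns of $\Hb_C$. Here $\Hb_C$ consists of $\D_1$ together with the blocks $\Ab(\hb_j,\beta_j)$, $\beta_j\in\F_{2^m}$. The length and codimension, $n=2^m n_0+(2^m-1)=2^m(n_0+1)-1$ and $r=r_0+2m$, are immediate from \eqref{eq32:QM-H}. The indicator set $\Bs=\F_{2^m}$ can be chosen consistently with $\Ps_0$ because $n_0\ge 2^m\ge p(\Hb_0)$: we assign the $2^m$ field elements so that distinct subsets get distinct indicators and every element of $\F_{2^m}$ is actually used. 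This surjectivity is what the condition $n_0\ge 2^m$ provides.

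\textbf{Case $\overline{u}\neq\ozero$.} Since $\Hb_0$ has covering radius $2$ and $d\ge3$, the partition $\Ps_0$ is a $(2,0)$-partition. We distinguish two subcases.

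If $\overline{u}=\hb_i+\hb_j$ with $\hb_i,\hb_j$ in distinct subsets, then $\beta_i\ne\beta_j$. We pick the columns $(\hb_i,\xi,\beta_i\xi)$ and $(\hb_j,\xi',\beta_j\xi')$ and must solve $\xi+\xi'=\overline{x}$ and $\beta_i\xi+\beta_j\xi'=\overline{y}$. This linear system has determinant $\beta_i+\beta_j\ne0$, so it has a solution, and the two columns can be taken from distinct blocks.

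If instead $\overline{u}=\hb_j$ is a single column, we need $(\hb_j,\xi,\beta_j\xi)$ plus a second column from $\D_1$ or from the same block. Two columns of the same block give $(\ozero,\xi+\xi',\beta_j(\xi+\xi'))$, which is the wrong top part, so the second column must come from $\D_1$, which has the form $(\ozero,\ozero,\wb)$. Hence we choose $\xi=\overline{x}$ and $\wb=\overline{y}+\beta_j\overline{x}$. If $\wb=\ozero$, one column suffices.

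\textbf{Case $\overline{u}=\ozero$.} Two columns from the same block $j$ give $(\ozero,\eta,\beta_j\eta)$ for every $\eta\in\F_{2^m}$. If $\overline{x}\neq0$, we use surjectivity of the indicators to find $j$ with $\beta_j=\overline{y}/\overline{x}$; this is exactly where $\Bs=\F_{2^m}$ is needed. If $\overline{x}=0$, then $\overline{s}=(\ozero,\ozero,\overline{y})$ is a single column of $\D_1$, or the zero vector.

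Finally, $R_C\ne1$ follows by a counting argument, or simply by noting that a code with $R_C\le1$ would need $n\ge 2^r-1$. Hence $R_C=2$.

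For the bound $p(\Hb_C)\le2^{m+1}+1$, I would exhibit an explicit partition of the columns of $\Hb_C$ into the following subsets:
\begin{itemize}
\item for each $\xi\in\F_{2^m}$, the set $\{(\hb_j,\xi,\beta_j\xi)\}$ of all columns with a given $\xi$ (giving $2^m$ subsets);
\item the columns of $\D_1$;
\end{itemize}
possibly splitting further by a second parameter to reach at most $2^{m+1}+1$ subsets. The goal is that the sums produced in the covering argument above can always be realized by columns in distinct subsets. The main obstacle is the same-block sums with $\xi\ne\xi'$: these already lie in distinct $\xi$-classes, but the pair $\D_1$-column plus block column and the pair with $\xi=\xi'$ across blocks need separate checking. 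I would first try grouping by $\xi$ together with a parity class of $\beta_j$, and then verify each representation type from the covering argument against this grouping.
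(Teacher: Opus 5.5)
Your covering-radius argument is correct. It coincides with the one the paper gives in the proof of Construction $\QM_3^2$, which is $\QM_2^2$ applied to $C_{Wu}$: cross-block pairs from the $2\times 2$ system when $\overline{u}=\hb_i+\hb_j$, a block column plus a column of $\D_1$ when $\overline{u}=\hb_j$, and a same-block pair when $\overline{u}=\ozero$, using that every element of $\F_{2^m}$ occurs as an indicator.

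The genuine gap is the bound $p(\Hb_C)\le 2^{m+1}+1$. You leave it as a plan, and the plan is organized the wrong way round. Grouping by $\xi$ cannot work. Since $C_0$ has covering radius exactly $2$, there is a nonzero $\overline{u}$ that is not a column of $\Hb_0$. Every representation of $\overline{s}=(\overline{u},\ozero,\overline{y})^{tr}$ by at most two columns of $\Hb_C$ then uses two block columns $(\hb_i,\xi,\beta_i\xi)^{tr}$ and $(\hb_j,\xi',\beta_j\xi')^{tr}$ with $\xi+\xi'=\ozero$, i.e.\ $\xi=\xi'$. So the two columns always lie in the same $\xi$-class. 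These cross-block, equal-$\xi$ pairs are exactly the case you flagged and did not resolve. Refining by a two-valued ``parity class'' of $\beta$ would only help if every such $\overline{u}$ had a representation $\hb_i+\hb_j$ whose indicators fall in different classes. Nothing in the hypotheses gives this, and a two-valued label cannot separate all pairs among $2^m\ge 4$ indicators.

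The missing idea is to index the classes by the indicator rather than by $\xi$:
\begin{itemize}
\item For each $\beta\in\F_{2^m}$, take all columns of the blocks $\Ab(\hb_j,\beta_j)$ with $\beta_j=\beta$.
\item Split them into those with $\xi=\xi_1=0$, i.e.\ $(\hb_j,\ozero,\ozero)^{tr}$, and those with $\xi\ne 0$.
\item Add the columns of $\D_1$ as one further class.
\end{itemize}
This gives $2^{m+1}+1$ classes, and each representation in your covering argument uses distinct classes:
\begin{itemize}
\item The cross-block pair has $\beta_i\ne\beta_j$, because $\hb_i,\hb_j$ lie in distinct subsets of $\Ps_0$.
\item A block column plus a $\D_1$-column is automatically in distinct classes.
\item For $\overline{u}=\ozero$, $\overline{x}\ne\ozero$, you must take the specific pair $(\hb_k,\ozero,\ozero)^{tr}+(\hb_k,\overline{x},\beta_k\overline{x})^{tr}$, one column with $\xi=0$ and one with $\xi\ne 0$. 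An arbitrary same-block pair with $\xi+\xi'=\overline{x}$ will not do.
\end{itemize}
This is the partition the paper constructs.
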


\section{New results on binary linear codes of covering radius 2 and 1-saturating sets}\label{sec5:R=2new}
We introduce the notations.
\begin{align}
&\Theta(r)=\Theta(2t)\triangleq51\cdot2^{r/2-5}-1= 51\cdot2^{t-5}-1 \T{ for }r=10,18,20,~r=2t\ge28.\label{eq5:notEvenR=2New1}\db\\
&\widehat{\Theta}(r)\triangleq52\cdot2^{r/2-5}-3= 52\cdot2^{t-5}-3 \T{ for }r=2t=22,24,26.\label{eq5:notEvenR=2New2}
\end{align}

\begin{lemma}\label{lem5:Delta}
 Let $r=2t$. We have
 \begin{equation}\label{eq5:Delta}
  \Phi(r)-\Theta(r)=\widehat{\Phi}(r)-\widehat{\Theta}(r)=2^{r/2-5}=2^{t-5}.
 \end{equation}
\end{lemma}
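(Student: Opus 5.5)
The identity is a direct calculation from the definitions \eqref{eq4:notatKnwR=2-r-even1}, \eqref{eq4:notatKnwR=2-r-even2}, \eqref{eq5:notEvenR=2New1} and \eqref{eq5:notEvenR=2New2}. The plan is to put every function over the common power $2^{r/2-5}=2^{t-5}$ and subtract coefficients. The only point needing care is the ranges: $\Phi$ and $\Theta$ are defined for $r=10,18,20$ and $r\ge28$ with $r$ even, and $\widehat{\Phi}$ and $\widehat{\Theta}$ for $r=22,24,26$. Each equality is meant on the range where both of its functions are defined, and the common value $2^{t-5}$ is the same expression in both cases.

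First, since $2^{r/2-4}=2\cdot2^{r/2-5}$, we have $\Phi(r)=26\cdot2^{r/2-4}-1=52\cdot2^{r/2-5}-1$. Subtracting $\Theta(r)=51\cdot2^{r/2-5}-1$, the constants $-1$ cancel and
\[
\Phi(r)-\Theta(r)=(52-51)\cdot2^{r/2-5}=2^{r/2-5}.
\]

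Second, in the same way $\widehat{\Phi}(r)=26.5\cdot2^{r/2-4}-3=53\cdot2^{r/2-5}-3$. Subtracting $\widehat{\Theta}(r)=52\cdot2^{r/2-5}-3$, the constants $-3$ cancel and
\[
\widehat{\Phi}(r)-\widehat{\Theta}(r)=(53-52)\cdot2^{r/2-5}=2^{r/2-5}.
\]
Finally, substituting $r=2t$ gives $2^{r/2-5}=2^{t-5}$, which completes \eqref{eq5:Delta}.

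There is no real obstacle. The one thing to check is that the rewriting with exponent $r/2-5$ involves no non-integer exponents on the stated ranges: since $r\ge10$, we have $t\ge5$, so $2^{t-5}$ is a positive integer.
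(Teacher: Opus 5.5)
Your proposal is correct and follows the paper's approach: the paper also derives the identity directly from the definitions of $\Phi$, $\widehat{\Phi}$, $\Theta$, $\widehat{\Theta}$. Your explicit rewriting over the common factor $2^{r/2-5}$, giving coefficients $52-51$ and $53-52$, supplies exactly the arithmetic that the paper leaves implicit.
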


\begin{proof}
  The assertions follow directly from \eqref{eq4:notatKnwR=2-r-even1}, \eqref{eq4:notatKnwR=2-r-even2}, \eqref{eq5:notEvenR=2New1}, \eqref{eq5:notEvenR=2New2}.
\end{proof}

In Theorems \ref{th5:Wu code} and \ref{th5:-3r=22,24,26}--\ref{th5:R=2NewInfFam}, let the covering density $\mu(r,2)$ and asymptotic one $\overline{\mu}(2)$ be as in $\eqref{eq1:CovDensity}$ and \eqref{eq1:liminfdens}. New $[n,n-r]_22$ codes, obtained in these theorems, provide upper bounds on the length functions $\ell_2(r,2)$ and on the smallest sizes $s_2(r-1,1)$ of the $1$-saturating sets in $\PG(r-1,2)$ such that $\ell_2(r,2)=s_2(r-1,1)\le n$, see \eqref{eq1:cov=sat}. The decrease of the known upper bounds on the length function $\ell_2(r,2)$ is $\Delta(r,2)$.

\begin{theorem}\label{th5:Wu code} (S. Wu \cite{Wu2026})
\begin{description}
  \item[(i)]
Let $C_{Wu}$ be the binary linear $[50,40]_2$ code with the parity check $10\times 50$ matrix $\Hb_{Wu}$ of the form
\begin{equation}\label{eq5:Hwu}
  \Hb_{Wu}\triangleq\left[
\begin{array}{*{10}{r}}
  1 & 2 & 4 & 15 & 16 & 32 & 65 & 86 & 128 & 173 \\
183 & 202 & 212 & 247 & 256 & 297 & 320 & 329 & 341 & 366 \\
373 & 381 & 391 & 403 & 438 & 460 & 479 & 491 & 502 & 559 \\
576 & 608 & 653 & 734 & 742 & 754 & 771 & 777 & 789 & 821 \\
846 & 855 & 869 & 881 & 893 & 897 & 927 & 981 & 1003 & 1004
\end{array}
\right],
\end{equation}
where each 10-positional binary column with the least significant binary digit on the 1-st position is represented by the corresponding decimal integer, for example, $1=[1000000000]^{tr},2=[0100000000]^{tr},15=[1111000000]^{tr},1004=[0011011111]^{tr}$.

Then $C_{Wu}$ is a $[50,40,3]_22$ code of covering radius $2$ and distance $3$; it provides  $\ell_2(10,2)=s_2(9,1)\le50=\Theta(10)$ and $\Delta(10,2)=1$, as the shortest previously known $[n,n-10]_22$ code is the $[51,41,3]_22$ code of \cite{KaikRoseADSlike2003}. The covering density is $\mu(C_{Wu})=319/256=1.24609375$. The code $C_{Wu}$ is locally optimal (LO).  The matrix $\Hb_{Wu}$ corresponds to a minimal $1$-saturating $50$-set in $\PG(9,2)$.

  \item[(ii)]
  Let the partition $\Ps_{Wu}$ of the column set of $\Hb_{Wu}$ into $10$ subsets be as follows:
\begin{align}\label{eq5:PWu}
&\Ps_{Wu}\triangleq\{2,128,202,212,771,855,897,981\}, \{86\}, \{381,893,1003\},\db\\
&\{183,297\},\{1,65,247,256,320,438,502,734\},\{15,173,329,366,460,559,653,846\},\db\notag\\
&\{4,16,391,491,742,754,869,881\},\{479,1004\},\{403\},\db\notag\\
&\{32,341,373,576,608,777,789,821,927\}.\notag
\end{align}
Then $\Ps_{Wu}$ is a $2$-partition of the matrix $\Hb_{Wu}$, i.e. $p(\Hb_{Wu};\Ps_{Wu})=10$.

  \item[(iii)]
There is the linearly dependent triple of columns $(734,491,821)$ from three distinct subsets ($5$-th,$7$-th,$10$-th) of the partition $\Ps_{Wu}$.
\end{description}
\end{theorem}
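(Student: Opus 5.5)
This is a finite statement about one explicit $10\times 50$ matrix, so the proof is a certified computation organised around syndromes in $\F_2^{10}$. I would fix the decimal-to-column convention of \eqref{eq5:Hwu} and let $S$ denote the set of 50 integers, viewing each as a vector of $\F_2^{10}$ with bitwise XOR as addition.

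\textbf{Part (i).} First I check that the 50 integers are distinct and nonzero. This gives $d\ge3$.

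Next I exhibit one dependent triple. The one in part (iii), $734\oplus491\oplus821=0$, serves, so $d=3$.

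For covering radius $2$, I form the set $S\cup\{a\oplus b: a,b\in S,\ a\ne b\}$ and verify that it equals $\F_2^{10}\setminus\{0\}$. This is at most $50+1225$ XORs marked in a bitmap of size $1024$. Together with the zero column, which is the empty sum, Definition \ref{Def1:CoverRad}(ii) gives $R\le2$. Also $R>1$, since $50<1023$.

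The density then follows from \eqref{eq1:CovDensity}: $(1+50+1225)/1024=1276/1024=319/256$. Comparison with the $[51,41]_22$ code of \cite{KaikRoseADSlike2003} gives $\Delta(10,2)=1$, and $\Theta(10)=51\cdot 2^0-1=50$.

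For local optimality I show that each column $h\in S$ is needed. For each such $h$, I look for a vector $v$ that has no representation as a sum of at most two columns of $S\setminus\{h\}$. The natural candidates are $h$ itself and the sums $h\oplus a$ for $a\in S$. This amounts to counting, for every $v$, the number of representations $v=a$ or $v=a\oplus b$, and checking that each $h$ belongs to some representation that is the unique one for its $v$. Minimality of the corresponding 1-saturating set in $\PG(9,2)$ is the same statement under the correspondence of Section \ref{sec1:Intro}.

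\textbf{Part (ii).} I first check that the ten listed subsets are disjoint and that their sizes sum to $8+1+3+2+8+8+8+2+1+9=50$, so they partition $S$.

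By Definition \ref{def31:partition} with $\ell=0$, I then need every nonzero $v$ to be either a single column or $a\oplus b$ with $a,b$ in different subsets. I recompute the covering bitmap, this time discarding pairs whose two columns lie in the same block, and confirm that the bitmap is still full.

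This is the only genuinely delicate step, and the main obstacle. Since $S$ has only $1275-1023=252$ more sums than needed and the blocks are large, the same-block pairs removed here are $C(8,2)\cdot4+C(9,2)+3+1+1+1=112+36+6=154$ in number. The proof therefore rests on an exhaustive check rather than a counting argument, and I would present it as a computer verification, perhaps with a table listing, for each $v$, one valid cross-block representation.

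\textbf{Part (iii).} Here I verify the XOR $734\oplus491\oplus821=0$ directly, by computing $734\oplus491=821$ in binary. I then read off from \eqref{eq5:PWu} that $734$ lies in the 5th subset, $491$ in the 7th, and $821$ in the 10th.
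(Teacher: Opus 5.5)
Your proposal is correct and takes essentially the paper's route: the paper also establishes all three parts by exhaustive computer verification (Wu's search, checked by two verifiers with opposite algorithms and independently in MAGMA), and your syndrome-bitmap checks, your uniqueness criterion for local optimality, and the hand check $734\oplus491=821$ are exactly the computations needed. One harmless slip in a side remark: the number of same-block pairs is $4\binom{8}{2}+\binom{9}{2}+\binom{3}{2}+2\binom{2}{2}=153$, not $154$, since the two singleton blocks contribute no pairs.
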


\begin{proof}
The assertions are obtained by S. Wu by computer search, see \cite[Section 3]{Wu2026}. The results were verified
using two verifiers with opposite algorithms, see\cite[Section 4]{Wu2026}.
\end{proof}

\begin{remark}
In order to have confirmation of the results of Theorem \ref{th5:Wu code} by two different people using different programs, S. Marcugini verified the statements of Theorem \ref{th5:Wu code} using programs in the MAGMA language [2].
\end{remark}

\begin{theorem}\label{th5:-3r=22,24,26}
There are $3$ new  $[n,n-r,3]_22$ codes with the following parameters and bounds:
\begin{align}\label{eq5:NewSporadic}
&R=2,~r=22,24,26,~n=52\cdot2^{r/2-5}-3=\widehat{\Theta}(r);\\
&\mu(22,2)\le1.31833,~\mu(24,2)\le1.31933,~\mu(26,2)\le1.31982;\notag\db\\
&\ell_2(r,2)=s_2(r-1,1)\le52\cdot2^{r/2-5}-3=\widehat{\Theta}(r),~\Delta(r,2)=2^{r/2-5}.\notag
\end{align}
\end{theorem}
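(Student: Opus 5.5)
The plan is to obtain all three codes from Construction $\QM_1^2$ of Theorem~\ref{th4:-3}, taking the Wu code as the starting code. By Theorem~\ref{th5:Wu code}, $C_0=C_{Wu}$ is a $[50,40,3]_22$ code, so $n_0=50$ and $r_0=10$. By Theorem~\ref{th5:Wu code}(ii), its parity check matrix $\Hb_{Wu}$ admits the $2$-partition $\Ps_{Wu}$ with $p(\Hb_{Wu};\Ps_{Wu})=10$.

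For $m=6,7,8$ I choose an indicator set $\Bs\subseteq\{*\}\cup\F_{2^m}\setminus\{1\}$. Its elements are assigned so that columns in the same subset of $\Ps_{Wu}$ share an indicator and the $10$ subsets receive $10$ distinct indicators. This is possible because $\#(\{*\}\cup\F_{2^m}\setminus\{1\})=2^m\ge 64>10$, and the condition $2^m\ge p(\Hb_0)=10$ of \eqref{eq4:-3inp} clearly holds. The matrix $\D$ is taken exactly as in \eqref{eq4:-3inp}. Theorem~\ref{th4:-3} then yields an $[n,n-r,3]_22$ code with
\begin{equation*}
r=10+2m,\qquad n=2^m(50+2)-3=52\cdot 2^m-3 .
\end{equation*}
For $m=6,7,8$ we get $r=22,24,26$. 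Since $m=r/2-5$, this gives $n=52\cdot2^{r/2-5}-3=\widehat{\Theta}(r)$, as in \eqref{eq5:notEvenR=2New2}. (The values $m=4,5$ would give $r=18,20$, but those lengths are already beaten by the infinite family, so only $m=6,7,8$ are recorded here.)

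The remaining parts of \eqref{eq5:NewSporadic} are routine. The bound $\ell_2(r,2)=s_2(r-1,1)\le\widehat{\Theta}(r)$ follows from \eqref{eq1:cov=sat}. The decrease is $\Delta(r,2)=\widehat{\Phi}(r)-\widehat{\Theta}(r)=2^{r/2-5}$ by Lemma~\ref{lem5:Delta}, compared with the previously best codes \eqref{eq4:KnwSporadic} of Theorem~\ref{th4:KnownFamR=2}(iv). The densities come from \eqref{eq1:CovDensity} with $R=2$, namely $\mu=\bigl(1+n+\binom{n}{2}\bigr)/2^r$, evaluated at the three cases:
\begin{align*}
&n=3325,\ r=22; \qquad n=6653,\ r=24; \qquad n=13309,\ r=26,
\end{align*}
which should give approximately $1.31833$, $1.31933$ and $1.31982$.

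There is no real obstacle, since all the combinatorial work sits in Theorems~\ref{th4:-3} and~\ref{th5:Wu code}. The only point needing care is checking that the hypotheses of $\QM_1^2$ are met exactly: the indicator $1$ is excluded, indicators on distinct subsets of $\Ps_{Wu}$ are distinct, and $\D$ has the prescribed three-block form of width $3(2^m-1)-1$. The length count $n=2^m n_0+3(2^m-1)-1-\ldots$ must collapse to $2^m(n_0+2)-3$, which Theorem~\ref{th4:-3} guarantees. I will also double-check the numerical density values, since they are stated to five decimals.
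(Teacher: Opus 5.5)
Your proposal is correct and matches the paper's proof: apply Construction $\QM_1^2$ of Theorem~\ref{th4:-3} to $C_{Wu}$ with the partition $\Ps_{Wu}$ ($n_0=50$, $r_0=10$, $p=10$) for $m=6,7,8$, read off $n=52\cdot2^m-3$, and obtain $\Delta(r,2)$ from \eqref{eq4:KnwSporadic} and Lemma~\ref{lem5:Delta}; the three densities you intend to evaluate do come out as stated. One small slip in a side remark: $\D$ has width $2(2^m-2)+1=2^{m+1}-3$, not $3(2^m-1)-1$, which is what gives $n=2^m n_0+2^{m+1}-3=2^m(n_0+2)-3$, but since you take the length from Theorem~\ref{th4:-3} this does not affect the argument.
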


\begin{proof}
  We apply Construction $\QM_1^2$ of Theorem \ref{th4:-3}.  Let the starting code $C_0$  be the $[50,40,3]_22$ code $C_{Wu}$ with the parity check matrix $\Hb_{Wu}=\Hb_0$, see Theorem \ref{th5:Wu code}. So, $n_0=50$, $r_0=10$, $\Ps_0=\Ps_{Wu}$, $p(\Hb_0;\Ps_0)=10$. We take $m=6,7,8$ and obtain the needed codes, see \eqref{eq4:-3res}. The decrease  $\Delta(r,2)$ follows from \eqref{eq4:KnwSporadic}, \eqref{eq5:Delta}.
\end{proof}

\begin{theorem}\label{th5:r=18}
$\mathbf{Construction~\QM_3^2}$. In Basic Construction \emph{QM} of Section $\ref{sec3:qm concat}$, let the $[n_0,n_0-r_0]_2R$ starting code $C_0$  be the $[50,40]_22$ code $C_{Wu}$ with the parity check matrix $\Hb_{Wu}=\Hb_0$, see \eqref{eq32:H0} and  Theorem $\ref{th5:Wu code}$ with \eqref{eq5:Hwu}. We define a new $[n,n-r,d_C]_2R_C$ code $C$ by a parity check
 matrix $\Hb_C$ of the form \eqref{eq32:QM-H}, \eqref{eq32:Aj}, \eqref{eq32:xi}, \eqref{eq32:Wbm}, where the indicator set $\Bs$, parameter $m$, and the submatrix $\D$  are as follows: $\Bs=\F_{2^m},~m=4,~
\D=\D_1$.
Then the new code $C$ is an $[n,n-r,3]_2R_C$ code with the following parameters and bounds:
\begin{align}
&\QM_3^2:~R_C=2,~n=\Theta(18)=815,~r=18,~\mu(18,2)\le1.26847,\label{eq5:QM32res}\db\\
&p(\Hb_C)\le 2^5+1;~\ell_2(18,2)=s_2(17,1)\le815=\Theta(18),~\Delta(18,2)=2^4.\notag
\end{align}
Also, the matrix $\Hb_C$ admits a $2$-partition $\Ps_C$ into $2^5+1$ subsets such that
there are three subsets, consisting of one column, and these columns form a linearly dependent triple, see~\eqref{eq5:3subsets}.
\end{theorem}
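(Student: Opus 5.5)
The main part follows from Construction $\QM_2^2$ of Theorem \ref{th4:DDLR=2}; the extra claim about the partition needs a direct analysis of $\Hb_C$. We take the starting code $C_0=C_{Wu}$ with $n_0=50$, $r_0=10$ and $p(\Hb_0;\Ps_{Wu})=10$ (Theorem \ref{th5:Wu code}(ii)). We set $m=4$, $\Bs=\F_{16}$, $\D=\D_1$. The hypotheses of $\QM_2^2$ hold, since $n_0=50\ge 2^4=16\ge 10=p(\Hb_0)$. The indicators are assigned so that columns in distinct subsets of $\Ps_{Wu}$ get distinct elements of $\F_{16}$. Every element of $\F_{16}$ must be used, so the $50$ columns are spread over all $16$ indicators, refining the $10$ subsets as needed. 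By \eqref{eq4:CaseA2res} we get $R_C=2$ and $r=10+8=18$. The length is $n=2^4(50+1)-1=815=\Theta(18)$, and $p(\Hb_C)\le 2^5+1$. Then $\Delta(18,2)=2^4$ follows from \eqref{eq4:KnwFamR=2even r1} and Lemma \ref{lem5:Delta}. The density $\mu(18,2)=(1+815+\binom{815}{2})/2^{18}$ is a direct computation, about $1.2685$.

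Distance $3$: the columns of $\Hb_C$ are pairwise distinct and nonzero. Columns of $\D_1$ are distinct columns of $\Wb_4$ in the bottom block. Inside $\Ab(\hb_j,\beta_j)$ the $\xi$ rows separate the columns. Columns from different $\Ab$ blocks differ either in $\hb_j$ (the columns of $\Hb_{Wu}$ are distinct) or, when $\hb_j=\hb_{j'}$, which cannot happen. Columns of $\Ab$ differ from those of $\D_1$ in the top $r_0$ rows. So $d\ge3$. A dependent triple also exists: take $\xi=0$ columns of three $\Ab$ blocks whose $\hb$ form the dependent triple $(734,491,821)$ of Theorem \ref{th5:Wu code}(iii). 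This triple is not automatically dependent, because the bottom rows are $0$ there but the middle rows are also $0$, so the full columns equal $(\hb,0,0)^{tr}$ and sum to zero. Hence $d=3$.

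The extra claim is the main obstacle. We need an explicit 2-partition $\Ps_C$ of $\Hb_C$ into $33$ subsets, three of which are singletons forming a dependent triple. The natural candidate comes from the proof of Theorem \ref{th4:DDLR=2} in \cite{DDL-IEEE}. The $2^{m+1}+1$ subsets are the following, indexed by $\xi\in\F_{16}$:
\begin{itemize}
\item for $\xi\neq0$, the columns with a fixed second-block entry $\xi$ in the $\Ab$ blocks, $2^m-1=15$ subsets;
\item the columns of $\D_1$, grouped compatibly with the Hamming matrix, roughly $2^m$ subsets;
\item one special subset built from the $\xi=0$ columns.
\end{itemize}
I would refine this construction so that the three singletons are the columns $(734,0,0)^{tr}$, $(491,0,0)^{tr}$, $(821,0,0)^{tr}$, separated out of the $\xi=0$ part. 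Their sum is zero, so they form the required dependent triple.

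The work is then to recheck the covering argument of \cite{DDL-IEEE} with this refined partition. Every syndrome $(\sb_0,\sb_1,\sb_2)$ must be a sum of at most two columns from distinct subsets. The cases are:
\begin{itemize}
\item $\sb_0$ is a sum of two columns of $\Hb_{Wu}$ from distinct $\Ps_{Wu}$ subsets, solved via $\beta_i\ne\beta_j$ and the $2\times2$ linear system in $\xi,\xi'$;
\item $\sb_0$ is a single column, solved using $\D_1$ or a column with $\xi$ chosen;
\item $\sb_0=0$, solved by $\D_1$ columns.
\end{itemize}
Splitting off the three zero-$\xi$ columns must not break the case where a syndrome needs two columns from the same merged subset. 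If a generic argument fails, I would fall back on a computer verification of $\Ps_C$, as was done for Theorem \ref{th5:Wu code}, and record the three subsets explicitly as in \eqref{eq5:3subsets}.
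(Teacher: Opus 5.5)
Your derivation of the main parameters is fine. The code is exactly $\QM_2^2$ applied to $C_{Wu}$ with $m=4$, which gives $R_C=2$, $r=18$, $n=815$, $\mu=332521/2^{18}\approx1.268467$ and $\Delta(18,2)=2^4$. You also correctly get $d_C=3$, either from three columns of $\Wb_4$ in $\D_1$ summing to zero or from $(734,\zb_8)^{tr},(491,\zb_8)^{tr},(821,\zb_8)^{tr}$; your sentence calling that triple "not automatically dependent" contradicts itself, but the conclusion is right.

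\textbf{The gap is the final claim about $\Ps_C$.} The partition you propose groups the $\Ab$-columns by their $\xi$-value, and it is not a $2$-partition. Take $\Ub=(\pib,\zb_4,\ub_2)^{tr}$ with $\ub_2\neq\zb_4$ and $\pib\in\F_2^{10}$ nonzero and not a column of $\Hb_{Wu}$; most of $\F_2^{10}$ is like this. The top part forces two $\Ab$-columns $(\hb_i,\xb,\beta_i\xb)^{tr}+(\hb_j,\yb,\beta_j\yb)^{tr}$. Then $\xb+\yb=\zb_4$ gives $\xb=\yb$, and $(\beta_i+\beta_j)\xb=\ub_2\neq\zb_4$ gives $\xb\neq\zb_4$, so both columns lie in your class $\xi=\xb$.

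Your stated worry is also misdirected. Refining a $2$-partition never destroys the $2$-partition property. What splitting three columns out of an existing $33$-subset partition destroys is the count $2^5+1$. Splitting $\D_1$ into about $2^m$ pieces likewise inflates the count; $\D$ should be a single subset. The partition behind $\QM_2^2$ groups columns by indicator, not by $\xi$.

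\textbf{The missing idea is to do the refinement in the starting code, where it costs nothing.} Because $\Bs=\F_{16}$ needs $16$ indicator classes anyway, split the $5$th, $7$th and $10$th subsets of $\Ps_{Wu}$ (sizes $8,8,9$) into three pieces each, so that $734$, $491$, $821$ become singletons. This gives a $2$-partition $\Ps_{Wu}^*$ with exactly $10+6=16$ subsets, each with its own indicator.

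Then define $\Ps_C$ as follows. For each indicator class, put the columns $(\hb_\bullet,\zb_4,\zb_4)^{tr}$ in one subset and the remaining block columns in another, giving $2\cdot16$ subsets; add $\D$ as one further subset, for $33$ in total. The covering check runs through four cases:
\begin{itemize}
\item If $\pib=\hb_i+\hb_j$, the two columns come from distinct indicator classes.
\item If $\pib=\hb_j$, use $(\hb_j,\ub_1,\beta_j\ub_1)^{tr}$ plus a column of $\D$.
\item If $\pib=\zb_{10}$ and $\ub_1\neq\zb_4$, use $(\hb_k,\zb_4,\zb_4)^{tr}+(\hb_k,\ub_1,\beta_k\ub_1)^{tr}$ with $\beta_k=\ub_2/\ub_1$. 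These share a class, but the $\xi=0$ split separates them.
\item If $\pib=\zb_{10}$ and $\ub_1=\zb_4$, a single column of $\D$ suffices.
\end{itemize}
Since $734$, $491$, $821$ are alone in their indicator classes, their $\xi=0$ columns form three singleton subsets, and these columns sum to zero. Without this construction, or an explicitly specified partition checked by computer, the last assertion of the theorem remains unproved.
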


\begin{proof}
 We obtain a 2-partition $\Ps_{Wu}^*$ of $16=2^m$ subsets, by partitioning each of the $5$-th, $7$-th, and $10$-th
subsets of $\Ps_{Wu}$ \eqref{eq5:PWu} into 3 subsets so that each column of the linearly dependent triple (734,491,821) forms a separate subset. We then set $\Ps_0=\Ps_{Wu}^*$ and assign to all columns of every subset of $\Ps_0$ the same indicator different from the indicators of the other subsets; this provides $\Bs=\F_{2^m}$.

In \eqref{eq5:QM32res}, the values of $n$ and $r$ follow from the construction. The minimum distance $d_C=3$ follows from the submatrix $\D$.

By Definition~\ref{Def1:CoverRad}(ii), to prove $R_C=2$ we show that any column $\Ub \in\F_2^{\,r}$  can be represented as a sum of $\le2$ columns of $\Hb_C$. Let $\Ub=(\pib,\ub_1,\ub_2)^{tr}\in\F_2^{\,r}$, where $\pib\in\F_2^{\,r_0},\ub_1,\ub_2\in\F_2^m$. By \eqref{eq32:H0}, $\Hb_0=[\hb_1\hb_2\ldots\hb_{50}],~\hb_j\in\F_2^{\,r_0},~j=1,\ldots,50$.

  Let $\pib=\hb_i+\hb_j$. Then $\hb_i$, $\hb_j$ belong to distinct subsets of $\Ps_0$; $\beta_i\ne\beta_j$.
   We find $\xb$, $\yb$ from the system $\xb+ \yb=\ub_1$, $\beta_i \xb+\beta_j \yb=\ub_2$.
  Now $\Ub= (\hb_i,\xb,\beta_i \xb)^{tr} +(\hb_j,\yb,\beta_j \yb)^{tr}$.

  Let $\pib=\hb_j$. Then
  $\Ub=(\hb_j, \ub_1, \beta_j\ub_1)^{tr} +(\zb_{r_0+m}, \beta_j\ub_1+\ub_2)^{tr}$, $(\zb_{r_0+m}, \beta_j\ub_1+\ub_2)^{tr}\in\D$.

  Let $\pib=\zb_{r_0}$.
  Let $\ub_1\ne0$. As $\Bs=\F_{2^m}$, indicator $\beta_k = \ub_2/\ub_1$
 exists; now $\Ub=(\hb_k, \zb_m, \zb_m)^{tr} +(\hb_k, \ub_1, \beta_k\ub_1)^{tr}$. If $\ub_1=0$, then $\Ub=(\zb_{r_0+m}, \ub_2)^{tr}\in\D$.

 Finally, we construct the $2$-partition $\Ps_{C}$ of $2^{m+1}+1=2^5+1$ subsets.

We obtain $2^{m}=16$ interim subsets, partitioning the columns of $\Hb_C$ (except $\D$) so that if columns $\hb_i, \hb_j$ of $\Hb_0=\Hb_{Wu}$ belong to distinct subsets of
$\Ps_0=\Ps_{Wu}^*$, then the columns of submatrices $\Ab(\hb_i,\beta_i)$, $\Ab(\hb_j,\beta_j)$ belong to distinct subsets
of the partition of $\Hb_C$.
Then we obtain $2^{m+1}=2^5$ subsets of $\Ps_{C}$ partitioning every interim subset into two
subsets such that the first one contains all columns of the
form $(\hb_\bullet, \zb_4, \zb_4)^{tr}$, where $\xi_1=0$ is used, and the second one contains the rest of the
columns. This is connected with the case  $\pib=\zb_{r_0}$, $\ub_1\ne0$. The  $(2^5+1)$-th subset consists of the columns of $\D$.

By Theorem \ref{th5:Wu code}(ii),(iii) and the processes forming $\Ps_{Wu}^*$ and $\Ps_{C}$, there are three subsets of $\Ps_{C}$, consisting of one column, such that
\begin{equation}\label{eq5:3subsets}
\{(491,\zb_8)^{tr}\},\{(734,\zb_8)^{tr}\},\{(821,\zb_8)^{tr}\};~(491,\zb_8)^{tr}=(734,\zb_8)^{tr}+(821,\zb_8)^{tr}.
\end{equation}
 \end{proof}

\begin{theorem}\label{th5:r=28}
$\mathbf{Construction~\QM_4^2}$. In Basic Construction \emph{QM} of Section $\ref{sec3:qm concat}$, let the $[n_0,n_0-r_0]_2R$ starting code $C_0$  be the $[815,815-18,3]_22$ code obtained by Construction $\QM_3^2$. We define a new $[n,n-r,d_C]_2R_C$ code $C$ by a parity check matrix $\Hb_C$ of the form \eqref{eq32:QM-H}--\eqref{eq32:Wbm}, where the indicator set $\Bs$, parameter $m$, and the submatrix $\D$  are as follows: $\Bs=\F_{2^m}\cup\{*\},~m=5,~
\D=\D_1$.
Then the new code $C$ is an $[n,n-r,3]_2R_C$ code with the following parameters and bounds:
\begin{align}
&\QM_4^2:~R_C=2,~n=\Theta(28)=26111,~r=28,~\mu(28,2)\le1.26998,\label{eq5:QM42res}\db\\
&p(\Hb_C)\le 2^6+2;~\ell_2(28,2)=s_2(27,1)\le26111=\Theta(28),~\Delta(28,2)=2^9.\notag
\end{align}
\end{theorem}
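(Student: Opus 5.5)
The plan is to verify Construction $\QM_4^2$ directly, in the same way as the proof of Theorem~\ref{th5:r=18}. The one new ingredient is the indicator $*$, and I intend to place it on the special singleton column provided by Theorem~\ref{th5:r=18}.

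\textbf{Step 1: indicators and parameters.} Take $\Ps_0=\Ps_C$ from Theorem~\ref{th5:r=18}. It is a $2$-partition of the $[815,797,3]_22$ matrix into $2^5+1=33$ subsets, and it contains the singletons \eqref{eq5:3subsets}. Assign one common indicator to all columns of each subset, with distinct subsets getting distinct indicators. Thirty-two subsets receive the elements of $\F_{32}$, and the singleton $\{(491,\zb_8)^{tr}\}$ receives $*$. Then $\Bs=\F_{2^5}\cup\{*\}$, and $\#\Bs=33=p(\Hb_0)$, as the construction requires.

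The length is $n=2^5\cdot815+31=26111$ and $r=18+10=28$. Lemma~\ref{lem5:Delta} gives $\Theta(28)=26111$, and comparing with $\Phi(28)$ gives $\Delta(28,2)=2^9$. The density bound follows from \eqref{eq1:CovDensity} by direct computation.

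For $d_C=3$: all columns of $\Hb_C$ are distinct and nonzero, because the columns of $\D_1$ are distinct Hamming columns and the columns inside each block $\Ab$ differ in their $\xi$-part. Moreover, three columns of $\D_1$ already sum to zero.

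\textbf{Step 2: covering radius.} Write $\Ub=(\pib,\ub_1,\ub_2)^{tr}$ and split into cases.

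\emph{Case $\pib=\hb_i+\hb_j$ with $\hb_i,\hb_j$ in distinct subsets.} If both indicators lie in $\F_{32}$, solve the linear system exactly as in Theorem~\ref{th5:r=18}. If $\beta_j=*$, use $(\hb_i,\ub_1,\beta_i\ub_1)^{tr}+(\hb_j,\zb_5,\xi)^{tr}$ with $\xi=\ub_2+\beta_i\ub_1$. Both indicators cannot be $*$, since the $*$-subset is a singleton.

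\emph{Case $\pib=\hb_j$ with $\beta_j\in\F_{32}$.} Use $(\hb_j,\ub_1,\beta_j\ub_1)^{tr}$, adding the column $(\zb_{23},\beta_j\ub_1+\ub_2)^{tr}$ of $\D_1$ when that part is nonzero.

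\emph{Case $\pib=(491,\zb_8)^{tr}$, the $*$-column.} This is where I expect the main obstacle. Only two columns are available, $\D_1$ has zero middle block, and a $*$-column also has zero middle block, so a nonzero $\ub_1$ cannot be produced that way. The remedy is \eqref{eq5:3subsets}: $\pib=(734,\zb_8)^{tr}+(821,\zb_8)^{tr}$ is a sum of two columns from distinct subsets, both with field indicators. The first case therefore applies, and this is exactly why $*$ must sit on that singleton.

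\emph{Case $\pib=\zb_{18}$.} If $\ub_1\ne0$, take $\beta_k=\ub_2/\ub_1\in\F_{32}$, which exists because all $32$ field elements are used. Then $\Ub=(\hb_k,\zb,\zb)^{tr}+(\hb_k,\ub_1,\beta_k\ub_1)^{tr}$. If $\ub_1=0$, then $\Ub$ is either a column of $\D_1$ or the zero vector.

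Finally, $R_C\ge2$ holds because $r=28>\log_2(n+1)$, so the code cannot have covering radius $1$.

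\textbf{Step 3: the partition bound $p(\Hb_C)\le2^6+2$.} Start from $33$ interim subsets, one for each indicator class. Split each of them into the columns using $\xi_1=0$ and the remaining columns. This is needed because the case $\pib=\zb_{18}$ uses two columns of the same block. Add $\D_1$ as one more subset, giving $66$ subsets in total.

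To check that this is a $2$-partition, re-examine each case of Step~2. Every representation found there uses columns from different blocks with distinct indicators, or the pair ($\xi=0$ column, $\xi\ne0$ column) from one block, or a block column together with a $\D_1$ column. In each situation the two columns lie in distinct subsets of the $66$-subset partition.
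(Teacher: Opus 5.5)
Your Steps 1 and 2 follow the paper's proof essentially verbatim, and they are correct. You place $*$ on the singleton $\{(491,\zb_8)^{tr}\}$ and run the same case analysis, including reducing $\pib=\hb_a$ to $(734,\zb_8)^{tr}+(821,\zb_8)^{tr}$. A small aside on $d_C=3$: your distinctness argument should also say that columns from different blocks, and block columns versus $\D_1$ columns, differ. This holds because the $\hb_j$ are distinct and nonzero, since $C_0$ has distance $3$.

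The genuine problem is the count in Step 3. You split all $33$ interim subsets into a $\xi_1=0$ part and a remainder, then add $\D_1$. That gives $2\cdot 33+1=67$ subsets, not $66$. As written, you therefore only prove $p(\Hb_C)\le 2^6+3$, not the claimed $2^6+2$.

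The missing observation is that the interim subset formed by the $*$-block $\Ab(\hb_a,*)$ need not be split. The split is needed only in the case $\pib=\zb_{18}$, $\ub_1\ne0$, where two columns of one block $\Ab(\hb_k,\beta_k)$ are used. There $\beta_k=\ub_2/\ub_1\in\F_{32}$, so the $*$-block never occurs. Elsewhere in your Step 2, a $*$-column is only paired with a column of a different block (the subcase $\beta_j=*$). The case $\pib=\hb_a$ uses no $*$-columns at all.

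So the fix is to keep $\Ab(\hb_a,*)$ as one subset, split only the $32$ field-indicator interim subsets, and add $\D_1$. This gives $2\cdot 32+1+1=2^6+2$ subsets, which is exactly the paper's partition.
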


\begin{proof}
   The parity check matrix $\Hb_C$ and the 2-partition $\Ps_C$ from Theorem \ref{th5:r=18} and its Proof are the parity check $18\times 815$ matrix $\Hb_0$ and the 2-partition $\Ps_0$ of $2^5+1$ subsets for this theorem. By \eqref{eq32:H0}, $\Hb_0=[\hb_1\hb_2\ldots\hb_{815}],~\hb_j\in\F_2^{18},~j=1,\ldots,815$.
Let the column $(491,\zb_8)^{tr}$ \eqref{eq5:3subsets} be the column $\hb_a\in\Hb_0$, $a\in\{1,2,\ldots,815\}$.
 We assign to all columns of every subset of $\Ps_0$ the same indicator other than the indicators of other subsets; in that, we put $\beta_a=*$. This provides $\Bs=\F_{2^m}\cup\{*\}$.

 In \eqref{eq5:QM42res}, the values of $n$ and $r$ follow from the construction. The minimum distance $d_C=3$ follows from the submatrix $\D$.

 By Definition~\ref{Def1:CoverRad}(ii), to prove $R_C=2$ we show that any column $\Ub \in\F_2^{\,r}$  can be represented as a sum of $\le2$ columns of $\Hb_C$. Let $\Ub=(\pib,\ub_1,\ub_2)^{tr}\in\F_2^{\,r}$, where $\pib\in\F_2^{\,r_0},\ub_1,\ub_2\in\F_2^m$.

Let $\pib=\hb_i+\hb_j$.
Then $\hb_i$ and $\hb_j$ belong to distinct subsets of $\Ps_0$; $\beta_i\ne\beta_j$.
If $\beta_i,\beta_j\ne *$, we find $\xb$, $\yb$ from the system $\xb+ \yb=\ub_1$, $\beta_i \xb+\beta_j \yb=\ub_2$.
  Now $\Ub= (\hb_i,\xb,\beta_i \xb)^{tr} +(\hb_j,\yb,\beta_j \yb)^{tr}$.
If $j=a$ and $\beta_a=*$, then $\Ub= (\hb_i,\ub_1,\beta_i\ub_1)^{tr} +(\hb_a, \zb_m,\beta_i\ub_1+\ub_2)^{tr}$.

Let $\pib=\hb_j$.
  If $j\ne a$, $\beta_j\ne *$, then $\Ub=(\hb_j, \ub_1, \beta_j\ub_1)^{tr} +(\zb_{r_0+m}, \beta_j\ub_1+\ub_2)^{tr}$ with $(\zb_{r_0+m}, \beta_j\ub_1+\ub_2)^{tr}\in\D$. If $j=a$, then, by \eqref{eq5:3subsets}, $\pib=(734,\zb_8)^{tr}+(821,\zb_8)^{tr}$ and we return to the case $\pib=\hb_i+\hb_j$.

   $\pib=\zb_{r_0}$.
  Let $\ub_1\ne0$. As $\Bs=\F_{2^m}\cup\{*\}$, indicator $\beta_k = \ub_2/\ub_1\ne*$
 exists. We have $\Ub=(\hb_k, \zb_m, \zb_m)^{tr} +(\hb_k, \ub_1, \beta_k\ub_1)^{tr}$.
If $\ub_1=0$, then $\Ub=(\zb_{r_0+m},\ub_2)^{tr}\in\D$.

Now, we construct the $2$-partition $\Ps_C$ of $2^{m+1}+2=2^6+2$ subsets.
We obtain $2^{m}+1=33$ interim subsets, partitioning the columns of $\Hb_C$ (except $\D$) so that if columns $\hb_i, \hb_j$ of $\Hb_0$ belong to distinct subsets of $\Ps_0$, then the columns of submatrices $\Ab(\hb_i,\beta_i)$, $\Ab(\hb_j,\beta_j)$ belong to distinct subsets of the partition of $\Hb_C$. Then we obtain $2^{m+1}=2^6$ subsets of $\Ps_{C}$ partitioning every interim subset (except $\{\Ab(\hb_1,*)\}$) into two subsets such that the first one contains all columns of the
form $(\hb_\bullet, \zb_5, \zb_5)^{tr}$, where $\xi_1=0$ is used, and the second one contains the rest of the
columns. This is connected with the case  $\pib=\zb_{r_0}$, $\ub_1\ne0$, in which the columns of $\{\Ab(\hb_1,*)\}$ do not appear; these columns form $(2^6+1)$-th subset. The  $(2^6+2)$-th subset consists of the columns of~$\D$.
\end{proof}

\begin{theorem}\label{th5:n0Theta}
 $\mathbf{Construction~\QM_5^2}$. In Construction $\QM_2^2$ of Theorem \emph{\ref{th4:DDLR=2}}, let  the starting code $C_0$ be an $[n_0,n_0-2t_0,3]_22$ code with $r_0=2t_0$, $n_0=51\cdot2^{r_0/2-5}-1=51\cdot2^{t_0-5}-1=\Theta(r_0)$, and a parity check matrix $\Hb_0$ with $p(\Hb_0)\le2^{\lambda_0}+2$. We define a new $[n,n-r,d_C]_2R_C$ code $C$ by a parity check
$r\times n$ matrix $\Hb_C$ of the form \eqref{eq32:QM-H}, \eqref{eq32:Aj}, \eqref{eq32:xi}, \eqref{eq32:Wbm} with the following indicator set $\Bs$, parameter $m$, and submatrix~$\D$:
\begin{equation}
\QM_5^2:~\Bs=\F_{2^m},~t_0\ge m\ge \lambda_0+1,~
 \D=\D_1. \label{eq5:QM1ainp}
\end{equation}
Then the new code $C$ is an $[n, n - r,3]_2R_C$ code with the following parameters and bounds:
\begin{align}
&\QM_5^2:~R_C=2,~r = 2t_0+2m=2t,~t=t_0+m,~n=51\cdot2^{r/2-5}-1=\Theta(r),\label{eq5:QM1ares}\db\\
&p(\Hb_C)\le 2^{m+1}+1;~\ell_2(r,2)=s_2(r-1,1)\le\Theta(r),~\Delta(r,2)=2^{r/2-5}.  \notag
\end{align}
\end{theorem}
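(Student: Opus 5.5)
The theorem is a special case of Construction $\QM_2^2$ of Theorem \ref{th4:DDLR=2}, so the plan is to check its hypotheses and then compute the parameters.

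\textbf{Step 1: hypotheses of Theorem \ref{th4:DDLR=2}.} We need $\Bs=\F_{2^m}$, $\D=\D_1$ and $n_0\ge 2^m\ge p(\Hb_0)$. From $m\ge\lambda_0+1$ we get $2^m\ge 2^{\lambda_0+1}=2^{\lambda_0}+2^{\lambda_0}\ge 2^{\lambda_0}+2\ge p(\Hb_0)$, since $\lambda_0\ge1$ in all intended applications. From $m\le t_0$ we get $2^m\le 2^{t_0}<51\cdot2^{t_0-5}-1=n_0$ whenever $t_0\ge5$. The starting codes of interest are $r_0=10$ ($t_0=5$) and $r_0\ge18$, so this holds.

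One point needs care. $\QM_2^2$ requires an injective-like assignment of indicators, with columns from distinct subsets receiving distinct indicators, and with $\Bs$ equal to all of $\F_{2^m}$. When $p(\Hb_0)<2^m$, I would refine the partition $\Ps_0$ by splitting subsets until there are exactly $2^m$ nonempty subsets. This is possible because $n_0\ge2^m$, and a refinement of a 2-partition is again a 2-partition. Each refined subset then gets its own indicator, so every element of $\F_{2^m}$ is used.

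\textbf{Step 2: parameters.} Theorem \ref{th4:DDLR=2} gives $R_C=2$, $r=2t_0+2m=2t$, $p(\Hb_C)\le2^{m+1}+1$ and
\begin{equation*}
n=2^m(n_0+1)-1=2^m\cdot51\cdot2^{t_0-5}-1=51\cdot2^{t-5}-1=\Theta(r).
\end{equation*}
For the minimum distance, $d_C\ge3$ because $\D_1$ contains the distinct nonzero columns of $\Wb_m$ and all columns of $\Hb_C$ are nonzero and pairwise distinct. Distinctness follows since $\hb_j\ne\zb$, distinct $\hb_j$ are distinct, and within one $\Ab(\hb_j,\beta_j)$ the $\xi$'s are distinct. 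Also $d_C\le3$ because $d_C\le 2R_C+1$ fails to bind here, but $\D_1$ directly gives three dependent columns of $\Wb_m$ when $m\ge2$. Hence $d_C=3$.

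\textbf{Step 3: the bound and the improvement.} Then $\ell_2(r,2)=s_2(r-1,1)\le\Theta(r)$ follows from \eqref{eq1:cov=sat}, and $\Delta(r,2)=2^{r/2-5}$ follows from Lemma \ref{lem5:Delta}, which compares $\Theta$ with $\Phi$.

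\textbf{Main obstacle.} I expect Step 1 to be the delicate part. It involves the exact inequalities linking $\lambda_0$, $m$ and $t_0$, and it needs the refinement argument so that $\Bs=\F_{2^m}$ holds exactly. That exactness is what the covering case $\pib=\zb_{r_0}$ uses, where one needs an indicator equal to $\ub_2/\ub_1$ for every ratio. I would write out the refinement and the check $2^{\lambda_0}+2\le2^{\lambda_0+1}$ explicitly.
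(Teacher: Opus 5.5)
Your proposal is correct and follows the paper's proof. Both arguments derive $n_0\ge 2^m\ge p(\Hb_0)$ from $t_0\ge m\ge\lambda_0+1$, invoke Theorem \ref{th4:DDLR=2}, compute $n=2^m(\Theta(2t_0)+1)-1=\Theta(2t_0+2m)$, and take $\Delta(r,2)$ from Lemma \ref{lem5:Delta} and $d_C=3$ from $\D_1$. Your extra details are sound, and the paper leaves them implicit: the need for $\lambda_0\ge1$, refining $\Ps_0$ to exactly $2^m$ subsets so that $\Bs=\F_{2^m}$, and the argument that all columns are nonzero and pairwise distinct, giving $d_C\ge3$.
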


\begin{proof}
For $n_0=\Theta(2t_0)$ and $p(\Hb_0)\le2^{\lambda_0}+2$, the condition $n_0\ge2^m\ge p(\Hb_0)$  of \eqref{eq4:CaseA2inp} holds if $t_0\ge m\ge \lambda_0+1$. So, we can use the results of
\eqref{eq4:CaseA2res}. In particular, we have $n=2^m(\Theta(2t_0)+1)-1=51\cdot2^{t_0-5+m}-1=\Theta(2t_0+2m)=\Theta(2t)=\Theta(r)$. The value of $\Delta(r,2)$ follows from \eqref{eq4:KnwFamR=2even r1}, \eqref{eq5:Delta}. The minimum distance $d_C=3$ follows from $\D$.
\end{proof}

In Tables \ref{tab5:LengFun r32R=2} and \ref{tab5:LengFun33r64R=2}, the parameters $n,r$, and the covering density \eqref{eq1:CovDensity} of the best (as far as the authors know) binary linear covering $[n,n-r]_22$ codes of covering radius $R=2$ and redundancy (codimension) $2\le r\le64$  are written. The values of $n$ give an upper bound on the length function $\ell_2(r,2)$ and on the smallest sizes $s_2(r-1,1)$ of the $1$-saturating sets in $\PG(r-1,2)$ so that $\ell_2(r,2)=s_2(r-1,1)\le n$; the star * notes the exact value of $\ell_2(r,2)$. The new results, obtained in this paper, are written in bold font and also are noted by the big star $\bigstar$. For the known $n$ we write the references (the columns ``references'' and  ``refer.''). For the new codes we give the size $p(\Hb)$ of a 2-partition of a parity check matrix $\Hb$  and the decrease $\Delta(r,2)$ of the known upper bounds on $\ell_2(r,2)$, provided by the new results.
The new results are obtained using Theorems \ref{th5:Wu code}, \ref{th5:-3r=22,24,26}--\ref{th5:n0Theta} (Table \ref{tab5:LengFun r32R=2}) and only Theorem \ref{th5:n0Theta} (Table \ref{tab5:LengFun33r64R=2}). For the starting code we give the values of $n_0$ and $m$, and (only in Table \ref{tab5:LengFun r32R=2}) in the column ``Th.'' we note the theorems used.

The tables illustrate the iterative process to obtain the infinite family of Theorem \ref{th5:R=2NewInfFam}.

\begin{table}[htbp]
\caption{The parameters of the best binary linear $[n,n-r]_22$ covering codes, $r\le32$,
such that $\ell_2(r,2)=s_2(r-1,1)\le n$; the values of $n$ follow from $\phi(r),\vartheta(r)$ \eqref{eq4:notatKnwR=2even}, \eqref{eq4:notatKnwR=2odd}, \eqref{eq4:KnwFamR=2even r}, \eqref{eq4:KnwFamR=2odd r} (the known values) and $\Theta(r),\widehat{\Theta}(r)$ \eqref{eq5:notEvenR=2New1}, \eqref{eq5:notEvenR=2New2}, \eqref{eq5:NewSporadic}, \eqref{eq5:QM32res}, \eqref{eq5:QM42res}, \eqref{eq5:QM1ares} (the new ones, bold font); $\Delta=\Delta(r,2)$}
 \centering
  \begin{tabular}
  {r|c|r|c|l|c|c||c|c|c}\hline
$r$&for $n$ &\multicolumn{1}{c|}{$n$}&references&density$\le$&$p(\Hb)\le$&$\Delta$&$n_0$&$m$&Th.\\\hline
2 &&$2^{*}$&\cite{CHLL-bookCovCod,GrahSlo}&1      &    &&&\\
3 &$\vartheta(3)$&$4^{*}$&\cite{CHLL-bookCovCod,GrahSlo}&1.37500& &&&\\
4 &&$5^{*}$&\cite{CHLL-bookCovCod,GrahSlo}&1      & &&&\\
5 &$\vartheta(5)$&$9^{*}$&\cite{CHLL-bookCovCod,GrahSlo}&1.43750&    &&&\\
6 &&$13^{*}$&\cite{CHLL-bookCovCod,GrahSlo,Struik}&1.43750&    &&&\\
7 &$\vartheta(7)$&$19^{*}$ &\cite{DDL-IEEE,GabDavTombR=2,GrahSlo}&1.49219&&&&\\\hline
8 &$\phi(8)$&26  &\cite{BrPl1990,DDL-IEEE,GabDavTombR=2}&1.37500&&&&\\
9 &$\vartheta(9)$&39     &\cite{GabDavTombR=2}    &1.52540  &&&&\\
10&$\mathbf{\Theta(10)}$&    \textbf{50}        &$\bigstar$  &\textbf{1.24610}&\textbf{10}&\textbf{1}&&&\ref{th5:Wu code}\\
11&$\vartheta(11)$& 79    &\cite{DDL-IEEE,GabDavTombR=2}&1.54346& &       &&\\
12&$\phi(12)$&107& \cite{DDL-IEEE}&1.41089&&&&\\
13&$\vartheta(13)$&159   & \cite{GabDavTombR=2}    &1.55286&&        &&\\
14&$\phi(14)$&215& \cite{DDL-IEEE}&1.41730&&&&\\
15&$\vartheta(15)$&319    &\cite{DDL-IEEE,GabDavTombR=2}&1.55765&&          &&\\
16&$\phi(16)$&431 &\cite{DDL-IEEE}&1.42055&&&&\\
17&$\vartheta(17)$&639    &\cite{GabDavTombR=2}&1.56007&&            &&\\\hline
18&$\mathbf{\Theta(18)}$&\textbf{815}&$\bigstar$&\textbf{1.26847}&$\mathbf{2^5+1}$&$\mathbf{2^4}$&$\Theta(10)$&4&\ref{th5:r=18}\\
19&$\vartheta(19)$&1279 &\cite{DDL-IEEE,GabDavTombR=2}&1.56129&&&&&\\
20&$\mathbf{\Theta(20)}$&\textbf{1631}&$\bigstar$&\textbf{1.26925}&$\mathbf{2^6+1}$&$\mathbf{2^5}$&$\Theta(10)$&5&\ref{th5:n0Theta}\\
21&$\vartheta(21)$& 2559 &\cite{DDL-IEEE,GabDavTombR=2}&1.56190&  &&&\\
22&$\mathbf{\widehat{\Theta}(22)}$&\textbf{3325}&$\bigstar$&\textbf{1.31833}&&$\mathbf{2^6}$&$\Theta(10)$&6&\ref{th5:-3r=22,24,26}\\
23&$\vartheta(23)$& 5119 &\cite{DDL-IEEE,GabDavTombR=2}&1.56220&  &&&\\
24&$\mathbf{\widehat{\Theta}(24)}$&\textbf{6653}&$\bigstar$&\textbf{1.31933}&&$\mathbf{2^7}$&$\Theta(10)$&7&\ref{th5:-3r=22,24,26}\\
25&$\vartheta(25)$&10239 &\cite{GabDavTombR=2}&1.56235&& &&     \\
26&$\mathbf{\widehat{\Theta}(26)}$&\textbf{13309} &$\bigstar$&\textbf{1.31982}&&$\mathbf{2^8}$&$\Theta(10)$&8&\ref{th5:-3r=22,24,26}\\
27&$\vartheta(27)$&20479&\cite{DDL-IEEE,GabDavTombR=2}&1.56243&&&&\\
28&$\mathbf{\Theta(28)}$&\textbf{26111}&$\bigstar$ &\textbf{1.26998}&$\mathbf{2^6+2}$&$\mathbf{2^9}$&$\Theta(18)$&5&\ref{th5:r=28}\\
29&$\vartheta(29)$&40959&\cite{DDL-IEEE,GabDavTombR=2}&1.56247&&&&\\
30&$\mathbf{\Theta(30)}$&\textbf{52223}&$\bigstar$ &\textbf{1.27000}&$\mathbf{2^7+1}$&$\mathbf{2^{10}}$&$\Theta(18)$&6&\ref{th5:n0Theta}\\
31&$\vartheta(31)$&81919&\cite{DDL-IEEE,GabDavTombR=2}&1.56249&&&&\\
32&$\mathbf{\Theta(32)}$&\textbf{104447}&$\bigstar$&\textbf{1.27001}&$\mathbf{2^8+1}$&$\mathbf{2^{11}}$&$\Theta(18)$&7&\ref{th5:n0Theta}
\\\hline
 \end{tabular}
\label{tab5:LengFun r32R=2}
\end{table}

\begin{table}[htbp]
\caption{The parameters of the best binary linear $[n,n-r]_22$ covering codes,
$33\le r\le64$, such that $\ell_2(r,2)\le n$; the values of $n$ follow from $\vartheta(r)$ \eqref{eq4:notatKnwR=2odd}, \eqref{eq4:KnwFamR=2odd r} (the known values) and
$\Theta(r)$ \eqref{eq5:notEvenR=2New1}, \eqref{eq5:QM1ares} (the new ones, bold font); $\Delta=\Delta(r,2)$}
 \centering
  \begin{tabular}
{r|c|r|c|r|c|c||c|r}\hline
$r$&for $n$&\multicolumn{1}{c|}{$n$}&refer.&density$\le$&$p(\Hb)\le$&$\Delta$&$n_0$&$m$\\\hline
33&$\vartheta(33)$&163839   &\cite{GabDavTombR=2}&1.56250&&&\\
34&$\mathbf{\Theta(34)}$&\textbf{208895}&$\bigstar$&\textbf{1.27002}&$\mathbf{2^8+1}$&$\mathbf{2^{12}}$&$\Theta(20)$&7\\
35&$\vartheta(35)$&327679   &\cite{DDL-IEEE,GabDavTombR=2}&1.56250&&&\\
36&$\mathbf{\Theta(36)}$&\textbf{417791}&$\bigstar$&\textbf{1.27002}&$\mathbf{2^9+1}$&$\mathbf{2^{13}}$&$\Theta(20)$&8\\
37&$\vartheta(37)$&655359   &\cite{DDL-IEEE,GabDavTombR=2}&1.56250&&&\\
38&$\mathbf{\Theta(38)}$&\textbf{835583}&$\bigstar$&\textbf{1.27002}&$\mathbf{2^{10}+1}$&$\mathbf{2^{14}}$&$\Theta(20)$&9\\
39&$\vartheta(39)$&1310719  &\cite{DDL-IEEE,GabDavTombR=2}&1.56250&&&\\
40&$\mathbf{\Theta(40)}$&\textbf{1671167}&$\bigstar$&\textbf{1.27002}&$\mathbf{2^{11}+1}$&$\mathbf{2^{15}}$&$\Theta(20)$&10\\
41&$\vartheta(41)$&  2621439&\cite{DDL-IEEE,GabDavTombR=2}&1.56250&&&\\
42&$\mathbf{\Theta(42)}$&\textbf{3342335}&$\bigstar$&\textbf{1.27002}&$\mathbf{2^8+1}$&$\mathbf{2^{16}}$&$\Theta(28)$&7\\
43&$\vartheta(43)$& 5242879&\cite{DDL-IEEE,GabDavTombR=2}&1.56250&&&\\
44&$\mathbf{\Theta(44)}$&\textbf{6684671}&$\bigstar$&\textbf{1.27002}&$\mathbf{2^9+1}$&$\mathbf{2^{17}}$&$\Theta(28)$&8\\
45&$\vartheta(45)$& 10485759&\cite{DDL-IEEE,GabDavTombR=2}&1.56250&&&\\
46&$\mathbf{\Theta(46)}$&\textbf{13369343}&$\bigstar$&\textbf{1.27002}&$\mathbf{2^{9}+1}$&$\mathbf{2^{18}}$&$\Theta(30)$&8\\
47&$\vartheta(47)$& 20971519&\cite{DDL-IEEE,GabDavTombR=2}&1.56250&&&\\
48&$\mathbf{\Theta(48)}$&\textbf{26738687}&$\bigstar$ &\textbf{1.27002}&$\mathbf{2^{10}+1}$&$\mathbf{2^{19}}$&$\Theta(30)$&9\\
49&$ \vartheta(49)$&41943039&\cite{DDL-IEEE,GabDavTombR=2}&1.56250&&&\\
50&$\mathbf{\Theta(50)}$&\textbf{53477375}&$\bigstar$ &\textbf{1.27002}&$\mathbf{2^{10}+1}$&$\mathbf{2^{20}}$&$\Theta(32)$&9\\
51&$\vartheta(51)$& 83886079&\cite{DDL-IEEE,GabDavTombR=2}&1.56250&&&\\
52&$\mathbf{\Theta(52)}$&\textbf{106954751}&$\bigstar$&\textbf{1.27002}&$\mathbf{2^{11}+1}$&$\mathbf{2^{21}}$
&$\Theta(32)$&10\\
53&$\vartheta(53)$&167772159&\cite{DDL-IEEE,GabDavTombR=2}&1.56250&&&\\
54&$\mathbf{\Theta(54)}$&\textbf{213909503}&$\bigstar$&\textbf{1.27002}&$\mathbf{2^{11}+1}$&$\mathbf{2^{22}}$&$\Theta(34)$&10\\
55&$\vartheta(55)$&335544319&\cite{DDL-IEEE,GabDavTombR=2}&1.56250&&&\\
56&$\mathbf{\Theta(56)}$&\textbf{427819007}&$\bigstar$&\textbf{1.27002}&$\mathbf{2^{12}+1}$&$\mathbf{2^{23}}$&$\Theta(34)$&11\\

57&$\vartheta(57)$&671088639&\cite{DDL-IEEE,GabDavTombR=2}&1.56250&&&\\
58&$\mathbf{\Theta(58)}$&\textbf{855638015}&$\bigstar$&\textbf{1.27002}&$\mathbf{2^{12}+1}$&$\mathbf{2^{24}}$&$\Theta(36)$&11\\
59&$\vartheta(59)$&1342177279&\cite{DDL-IEEE,GabDavTombR=2}&1.56250&&&\\
60&$\mathbf{\Theta(60)}$&\textbf{1711276031}&$\bigstar$&\textbf{1.27002}&$\mathbf{2^{13}+1}$&$\mathbf{2^{25}}$&$\Theta(36)$&12\\

61&$\vartheta(61)$&2684354559&\cite{DDL-IEEE,GabDavTombR=2}&1.56250&&&\\
62&$\mathbf{\Theta(62)}$&\textbf{3422552063}&$\bigstar$&\textbf{1.27002}&$\mathbf{2^{13}+1}$&$\mathbf{2^{26}}$&$\Theta(38)$&12\\

63&$\vartheta(63)$&5368709119&\cite{DDL-IEEE,GabDavTombR=2}&1.56250&&&\\
64&$\mathbf{\Theta(64)}$&\textbf{6845104127}&$\bigstar$&\textbf{1.27002}&$\mathbf{2^{14}+1}$&$\mathbf{2^{27}}$&$\Theta(38)$&13\\\hline
 \end{tabular}
  \label{tab5:LengFun33r64R=2}
\end{table}

\begin{theorem}\label{th5:R=2NewInfFam}
There exists a new infinite family of binary linear  $[n,n-r,3]_22$ codes
with a parity check matrix $\Hb$ of the form \eqref{eq32:QM-H}--\eqref{eq32:A*}, \eqref{eq32:xi}, \eqref{eq32:Wbm}, and parameters and bounds as in \eqref{eq5:InfFamR=2New}.
\begin{align}\label{eq5:InfFamR=2New}
&R=2,~r=2t,~r=10,18,20,\T{ and }r\ge28,~t=5,9,10,\T{ and }t\ge14, \db\\
&n=51\cdot2^{r/2-5}-1= 51\cdot2^{t-5}-1=\Theta(r),~\overline{\mu}(2)\le1.27002;\notag\db\\
&\ell_2(r,2)=s_2(r-1,1)\le51\cdot2^{r/2-5}-1=\Theta(r),~\Delta(r,2)=2^{r/2-5}. \notag
\end{align}
Also, this infinite family provides the upper bound $f(2)\le1.27002$ on the constant $f(2)$ of the Green's Open Problem 40 \cite[Section 5, Problem 40]{Green}.

Let $p(\Hb)$ be the size of a $2$-partition of the parity check matrix $\Hb$. For even $r=10$, $18$, $20$, $28\le r\le40$, the sizes $p(\Hb)$ are given in Tables \emph{\ref{tab5:LengFun r32R=2}, \ref{tab5:LengFun33r64R=2}}. For even $r\ge42$, we have
\begin{equation}\label{eq5:InfFamR=2NewLenFunLambda}
 \lambda=\lfloor r/4\rfloor-2,~p(\Hb)=2^\lambda+1,~\lambda+1=t-(\lceil r/4\rceil+1)\le t-12.
\end{equation}
\end{theorem}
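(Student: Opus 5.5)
The plan is to build the family by induction on $t$. The seeds are the codes of Theorems \ref{th5:Wu code}, \ref{th5:r=18} and \ref{th5:r=28}, and the inductive step is Construction $\QM_5^2$ of Theorem \ref{th5:n0Theta}.

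\textbf{Seeds.} The base cases supply codes of length $\Theta(r)$:
\begin{itemize}
\item $r=10$: the code $C_{Wu}$, with $p=10\le 2^3+2$;
\item $r=18$: the code of Theorem \ref{th5:r=18}, with $p\le 2^5+1$;
\item $r=28$: the code of Theorem \ref{th5:r=28}, with $p\le 2^6+2$.
\end{itemize}
Each of these has minimum distance $3$.

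\textbf{Inductive step.} Theorem \ref{th5:n0Theta} takes a $\Theta(2t_0)$ code with $p(\Hb_0)\le 2^{\lambda_0}+2$. It returns a $\Theta(2t_0+2m)$ code with $p(\Hb_C)\le 2^{m+1}+1$, for any $m$ with $\lambda_0+1\le m\le t_0$. So the task reduces to a covering problem on the set of exponents $t$.

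\textbf{Covering the small cases.}
\begin{itemize}
\item From $r_0=10$ ($t_0=5$, $\lambda_0=3$) we may take $m=4,5$. This gives $r=18$, reproducing Theorem \ref{th5:r=18} in length only, and $r=20$ with $p\le 2^6+1$.
\item From $r_0=18$ ($t_0=9$, $\lambda_0=5$) we may take $m=6,\dots,9$, giving $r=30,32,34,36$.
\item From $r_0=20$ ($t_0=10$, $\lambda_0=6$) we may take $m=7,\dots,10$, giving $r=34,\dots,40$.
\item Together with $r=28$, this covers all even $r$ with $28\le r\le 40$, with the $p(\Hb)$ values recorded in Tables \ref{tab5:LengFun r32R=2}, \ref{tab5:LengFun33r64R=2}.
\end{itemize}

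\textbf{Covering $r\ge 42$.} Here I would use strong induction. Given $t\ge 21$, I choose $m=\lfloor r/4\rfloor-3$ and $t_0=t-m$. The induction hypothesis supplies a $\Theta(2t_0)$ code whose partition size is $2^{\lambda_0}+1$, with $\lambda_0=\lfloor r_0/4\rfloor-2$ (or read from the tables when $r_0\le 40$).

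Three conditions must then be checked:
\begin{enumerate}
\item $t_0\ge 14$, so that the starting code exists in the family;
\item $m\le t_0$;
\item $m\ge\lambda_0+1$.
\end{enumerate}
Conditions 1 and 2 should follow from $m\approx t/2$ once $t\ge21$. The output then has $p(\Hb)=2^{m+1}+1=2^{\lambda}+1$, which matches \eqref{eq5:InfFamR=2NewLenFunLambda}. The remaining relation $\lambda+1=t-(\lceil r/4\rceil+1)$ is an integer identity in $r$, which I would check separately for $r\equiv 0$ and $r\equiv 2\pmod 4$.

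\textbf{Main obstacle.} The delicate step is condition 3, $m\ge\lambda_0+1$, with $t_0=t-m$. I would verify it by splitting on $r \bmod 8$. The table rows ($r=42$ from $28$ with $m=7$, $r=46$ from $30$ with $m=8$, and so on) suggest the minimal admissible choice is $t_0=\lceil t/2\rceil+\ldots$. If the formula as written fails at some residues, I would instead pick $m$ as large as possible subject to $m\le t_0$. That choice always satisfies condition 3, because $\lambda_0\approx t_0/2$, but I would then need to recheck the stated value of $p(\Hb)$.

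\textbf{Conclusion.} The bound on $\ell_2$ follows from \eqref{eq1:cov=sat}, and $\Delta$ follows from Lemma \ref{lem5:Delta} together with Theorem \ref{th4:KnownFamR=2}(ii). For the density, $\mu=\bigl(1+n+\binom n2\bigr)/2^{r}$ with $n=51\cdot 2^{t-5}-1$ tends to $51^2/2^{11}=2601/2048\approx1.27002$. Hence $\overline\mu(2)\le1.27002$, and therefore $f(2)\le1.27002$.
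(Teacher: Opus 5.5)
Your proposal is correct and follows the paper's proof. It uses the same three seeds and Construction $\QM_5^2$ as the inductive step, and your choice $m=\lfloor r/4\rfloor-3$, $t_0=\lceil t/2\rceil+3$ is exactly the paper's $m=t_0-7$ (for $r\equiv 2 \pmod 4$) and $m=t_0-6$ (for $r\equiv 0 \pmod 4$). Your strong induction simply makes precise the paper's remark that the iteration, listed explicitly only up to $r=76$, can obviously be continued.

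The condition $m\ge\lambda_0+1$ that you flag holds without any fallback:
\begin{itemize}
\item For $t_0\ge 21$ it reads $\lfloor t/2\rfloor-3\ge\lfloor t_0/2\rfloor-1$, which is true for all $t\ge 17$.
\item For $t_0=14,\dots,20$, the tabulated values $\lambda_0=6,7,8,8,9,10,11$ are strictly below the smallest resulting $m=7,8,\dots,13$, respectively.
\end{itemize}
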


\begin{proof}
We prove the assertions via an iterative process; it is illustrated by Tables \ref{tab5:LengFun r32R=2}, \ref{tab5:LengFun33r64R=2}.

\textbf{(i)} For $r=10$, $t=5$, we have \cite{Wu2026} the new $[50,40,3]_22=[\Theta(10),\Theta(10)-10,3]_22$ code $C_{Wu}$ with the parity check matrix $\Hb_{Wu}$ \eqref{eq5:Hwu}, admitting the new 2-partition $\Ps_{Wu}$ \eqref{eq5:PWu} into 10 subsets, see  Theorem \ref{th5:Wu code}. The code and the partition are very important for the next iterative process.

\textbf{(ii)} For $r=18,28$, Theorems \ref{th5:r=18}, \ref{th5:r=28} with \eqref{eq5:QM32res}--\eqref{eq5:QM42res} obtain $[\Theta(18),\Theta(18)-18,3]_22$ and $[\Theta(28),\Theta(28)-28,3]_22$ codes with $p(\Hb)=2^5+1$ and $p(\Hb)=2^6+2$, respectively.

\textbf{(iii)} For even $r=20$ and $r\ge30$, we iteratively use Construction $\QM_5^2$ of Theorem~\ref{th5:n0Theta}
 with \eqref{eq5:QM1ainp}, \eqref{eq5:QM1ares},
choosing convenient starting codes $C_0$ and parameters $m$.

\textbf{(iii-1)} As the starting code $C_0$ we take the $[50,40,3]_22$ code $C_{Wu}$ with $n_0=\Theta(10)$, $r_0=10$, $t_0=5$, $p(\Hb_0)=10<2^4+1$, see Theorem \ref{th5:Wu code}; we have $\lambda_0=4$. The condition $t_0\ge m\ge \lambda_0+1$ from \eqref{eq5:QM1ainp} holds if $m=4,5$. We take $m=5$. By \eqref{eq5:QM1ares}, we obtain a $[\Theta(20),\Theta(20)-20]_22$ code with $r=20$, $t=10$, $p(\Hb)=2^6+1$, $\Delta(20,2)=2^5$.

\textbf{(iii-2)} Let $C_0$ be the $[\Theta(18),\Theta(18)-18,3]_22$ code with $r_0=18$, $t_0=9$, $p(\Hb)=2^5+1$. We have $t_0\ge m\ge \lambda_0+1$ if $m=6,7,8,9$. We take $m=6,7$. By \eqref{eq5:QM1ares}, we obtain  $[\Theta(r),\Theta(r)-r,3]_22$ codes with $r=30,32$, $t=r/2$, $p(\Hb)=2^{m+1}+1$, $\Delta(r,2)=2^{t-5}$.

\textbf{(iii-3)} Let $C_0$ be the $[\Theta(20),\Theta(20)-20,3]_22$ code with $r_0=20$, $t_0=10$, $p(\Hb)=2^6+1$. We have $t_0\ge m\ge \lambda_0+1$  if $m=7,8,9,10$. For these $m$, by \eqref{eq5:QM1ares}, we obtain $[\Theta(r),\Theta(r)-r,3]_22$  codes with $r=34,36,38,40$, $t=r/2$, $p(\Hb)=2^{m+1}+1$, $\Delta(r,2)=2^{t-5}$.

\textbf{(iii-4)}  Let $C_0$ be the $[\Theta(28),\Theta(28)-28,3]_22$ code with $r_0=28$, $t_0=14$, $p(\Hb)=2^6+2$. We have $t_0\ge m\ge \lambda_0+1$ if $m=7,8,\ldots,14$. We take $m=7,8=t_0-7,t_0-6$. By \eqref{eq5:QM1ares}, we obtain  $[\Theta(r),\Theta(r)-r,3]_22$ codes with $r=42,44$, $t=r/2$, $\lambda=\lfloor r/4\rfloor-2=t-13$, $p(\Hb)=2^\lambda+1$, $\lambda+1=t-(\lceil r/4\rceil+1)=t-12$, $\Delta(r,2)=2^{t-5}$.

\textbf{(iii-5)} We consider 8 starting $[\Theta(r_0),\Theta(r_0)-r_0,3]_22$ codes $C_0$ with $r_0=2t_0=30,32,\ldots,44$, $t_0=15,16,\ldots,22$. By \eqref{eq5:QM1ainp}, the allowed values of $m$ satisfy $t_0\ge m\ge \lambda_0+1$, where $\lambda_0$ depends on the specific code. By above, for the considered values of $t_0$ we have, respectively, the following values of $\lambda_0+1$: $t_0-7,t_0-7,t_0-8,t_0-8,t_0-8,t_0-8,t_0-12,t_0-12$. For all $C_0$, we use $m=t_0-7,t_0-6$, and by \eqref{eq5:QM1ares}, obtain 16  new $[\Theta(r),\Theta(r)-r,3]_22$ codes with even $r=46,48,\ldots,74,76$,
$t=r/2$, $\lambda=\lfloor r/4\rfloor-2$, $p(\Hb)=2^\lambda+1$, $\Delta(r,2)=2^{r/2-5}$, $\lambda+1=t-(\lceil r/4\rceil+1)\le t-12$, see Table \ref{tab5:LengFun33r64R=2} for $r\le64$.

Obviously, we can continue this iterative process infinitely obtaining the infinite code family of \eqref{eq5:InfFamR=2New},  \eqref{eq5:InfFamR=2NewLenFunLambda}.

The upper bound $\overline{\mu}(2)\le1.27002$ directly follows from \eqref{eq1:liminfdens} and provides the bound $f(2)\le1.27002$.
\end{proof}

\section*{Acknowledgments} The research of A. A. Davydov was carried out within the state assignment of Ministry of Science and Higher Education of the Russian Federation No.\ FFNU-2025-0028 for Kharkevich Institute for Information Transmission Problems. The research of S. Marcugini and F. Pambianco was supported in part by the Italian National Group for Algebraic and Geometric Structures and their Applications (GNSAGA--INDAM) (Contract No. U-UFMBAZ-2019-000160, 11.02.2019) and by University of Perugia
(Project No. 98751: Strutture Geometriche, Combinatoria e loro Applicazioni, Base Research
Fund 2017--2019; Fighting Cybercrime with OSINT, Research Fund 2021). This work was partially funded by the SERICS project (PE00000014) under the MUR National Recovery and Resilience Plan funded by the European Union-NextGenerationEU.

\end{document}